\documentclass{article}
\usepackage{amssymb,amsmath}
\usepackage{graphics} 
\usepackage{graphicx} 
%
%

\usepackage{url,amsmath,amssymb,amsthm}
\usepackage{etoolbox}
\usepackage{soul,tikz, adjustbox, algorithm, algorithmic}
\usepackage{forest, tikz-qtree,pseudocode,ulem, framed}
\newtheorem{lemma}{Lemma}

\newtheorem{corollary}{Corollary}
\newtheorem{claim}{Claim}

\newenvironment{proof-sketch}{\noindent{\bf Proof Sketch:}\hspace*{1em}}{\qed}
\tikzset{
  treenode/.style = {align=center, inner sep=0pt, text centered,
    font=\sffamily},
  arn_n/.style = {treenode, circle, white, font=\sffamily\bfseries, draw=black,
    fill=black, text width=1.5em},
  arn_r/.style = {treenode, circle, black, draw=black, 
    text width=1.5em, very thick},
  arn_x/.style = {treenode, rectangle, draw=black,
    minimum width=0.5em, minimum height=0.5em}
}

\usepackage{adjustbox,expl3,etoolbox}

\usepackage{amsmath,amsthm} 
\newtheorem{theorem}{Theorem} 
\newtheorem{definition}{Definition} 
 
\newtheorem{example}{Example}

\makeatletter
\providecommand{\institute}[1]{
  \apptocmd{\@author}{\end{tabular}
    \par
    \begin{tabular}[t]{c}
    #1}{}{}
}
\makeatother

\usepackage{authblk}
\begin{document}

\title{It's Not Whom You Know, It's What You (or Your Friends) Can Do: 
Succint Coalitional Frameworks for Network Centralities. }
\author[1,2,*]{Gabriel Istrate}
\author[1,2]{Cosmin Bonchi\c{s}}
\author[1]{Claudiu Gatina}
\affil[1]{Department of Computer Science, West University of Timi\c{s}oara}
\affil[2]{e-Austria Research Institute}
\affil[*]{corresponding author email: gabrielistrate@acm.org}

\maketitle

\begin{abstract}We investigate the representation of game-theoretic measures of network centrality using a framework that blends a social network representation with the succint formalism of cooperative skill games. We discuss the expressiveness of the new framework and highlight some of its advantages, including a  fixed-parameter tractability result for computing
centrality measures under such representations. As an application we introduce new network centrality measures that capture the extent to which neighbors of a certain node can help it complete relevant tasks.
\end{abstract}

\section{Introduction}
Measures of network centrality have a long and rich history in the social sciences \cite{bloch2017centrality} and Artificial Intelligence. Such measures have proved useful for a variety of tasks, such as identifying spreading nodes \cite{suri2008determining} and gatekeepers for information dissemination \cite{narayanam2014shapley}, advertising in multiagent markets \cite{maghami2012identifying}, 
 finding important nodes in terrorist networks \cite{lindelauf2013cooperative,michalak2015defeating}. Recent work has demonstrated that the use of {\it coalitional game-theoretic} versions of centrality measures is especially beneficial \cite{szczepanski2014centrality,tarkowski2017game}, and has motivated the study of other topics, such as the extension of centrality measures to more realistic settings \cite{tarkowski2016closeness}, or the study of (frontiers of) tractability of such measures \cite{aadithya2010game,szczepanski2016efficient,tarkowski2018efficient}. 

The starting point of this paper is the observation that, while motivations for studying many social network concepts (centrality measures  in particular) are often stated informally in terms of \textit{capabilities} that nodes may possess,   \textit{capabilities that could help in performing certain actions}, the actual definitions of such measures do \textbf{not} usually make explicit the different capabilities agent have for acting.

To give just a famous example: Granovetter's celebrated paper on the strength of weak ties \cite{granovetter1977strength} considers edges adjacent to a given node by their frequency of interaction.  It argues that so-called \textit{weak ties} (i.e. to those agents  only interacting with the given node occasionally) are especially important. Such nodes  \textit{may be capable} to \textbf{tell $v$ about a certain job $j$, that $v$ itself does not know about.} The bolded statement may be seen, of course, as specifying a task $tell[j]$, that weak tie neighbors of $v$ may be able to complete as a consequence of their network position.  

The purpose of this paper is \textbf{to study 
 representational frameworks for network centrality that explicitly take into account the acting capabilities of various nodes.} We follow  \cite{skibski2017axiomatic} in advocating the study of network centrality measures from a coalitional game-theoretic perspective. Our concerns are somewhat different: whereas \cite{skibski2017axiomatic} mostly investigated representations of centrality measures from an axiomatic perspective, we study the use of \textit{succint coalitional representation frameworks} \cite{deng1994complexity,conitzer2003complexity,ieong2005marginal,elkind2007computational,bachrach2008coalitional} for such representations. The precise framework we investigate blends a network-based specification $G=(V,E)$ of the agent system  and  the so-called \textit{coalitional skill games} \cite{bachrach2008coalitional}, that informally endow agents with \textit{skills} that may prove instrumental in completing certain \textit{tasks}, and get profit from completing them. A centrality measure arises as an indicator (often the Shapley value) of the ``importance'' of the  agent in the associated coalitional game, measuring the extent to which the agent helps coalitions  profit from completing tasks.

The following is an outline of the (main results of the) paper:  we prove (Theorem~\ref{thm-univ}) that our representations are universal: \textit{all} centrality measures have an equivalent CSG representation. We then identify some limits of this result by identifying a natural property (rationality) that subsumes CSG representations with constant coefficients but not the natural eigenvector centrality measure (Theorem~\ref{counter}). Next we highlight a benefit of CSG representations, in the form of a fixed parameter tractability result (Theorem~\ref{pc}). Also as an  application of our framework, we define two  new centrality measures which aim to measure the extent to which an agent can \textit{enlist its neighbors} to help it complete a set of tasks.  These measures extend some important concepts such as the original game-theoretic network centrality \cite{suri2008determining}. 
We show that our helping measures have tractable explicit formulas for some special CSGs.  We then study (with limited success) the problem of axiomatic characterizations of helping centralities. The paper concludes with brief discussions and open problems. 

For sketches of the missing proofs we refer the reader to the Supplemental Material. 

\section{Preliminaries and Notations} 
\label{prelim}

We will use (and review below) notions from several areas:

\noindent \textbf{Theory of multisets.} A \textit{multiset} is a generalization of a set in which each element appears with a non-negative multiplicity. The \textit{union of two multisets $A,B$}, also denoted $A\cup B$, contains those elements that appear in $A,B$, or both. The multiplicity of such an element in $A\cup B$ is the sum of multiplicities of the element into $A,B$. Given multisets $A,B$, we write $A\subset B$ iff every element with positive multiplicity in $A$ has at least as high a multiplicity in $B$.

\noindent \textbf{Coalitional game theory.} We assume familiarity with the basics of Coalitional Game Theory (see \cite{chalkiadakis2011computational} for a recent readable introduction). For concreteness we review some definitions:

A \textit{coalitional game} is specified by a pair $\Gamma=(N,v)$ where $N=\{1,2,\ldots, n\}$ is a set of \textit{players}, and $v$ is a function $v:2^N\rightarrow \mathbb{R}$, called \textit{characteristic function}, which satisfies $v(\emptyset)=0.$  We will often specify a game by the characteristic function only (since $N$ is implicitly assumed in its definition). Also, denote by $\Gamma(N)$ the set of all coalitional games on $N$. Given integer $0\leq k\leq n$, we denote by $C_{k}(\Gamma)$ the set of coalitions of $\Gamma$ (i.e. subsets of $N$) having cardinality exactly $k$, and let $C(\Gamma)$ be the union of all sets $C_{k}(\Gamma)$. 
A game is \textit{monotonically increasing} if $v$ is monotonically increasing  with respect to inclusion. 

We can represent any game on set $N$ as a linear combination of \textit{veto games}: given $\emptyset \neq S\subseteq N$, the \textit{$S$-veto game on $N$} is the game with characteristic function 
\[
v_{S}(T)=\left\{\begin{array}{ll} 
  1, & \mbox{ if } S\subset T, \\ 
  0, & \mbox{ if otherwise.} \\ 
  \end{array}
\right.
\]
Indeed, it is well-known \cite{karlin2017game} (and easy to prove) that the set of veto games ($v_{S}$) forms a basis for the linear space of coalitional games on $N$. Coefficients $a_{S}$ in the decomposition $v=\sum_{S} a_{S}v_{S}$ are called \textit{Harsanyi's dividends}.  We will use solution concepts associated to coalitional games, notably the \textit{Shapley value}. This index tallies the fraction of the value $v(N)$ of the grand coalition that a given player $x\in N$ could fairly request. It has the formula \cite{chalkiadakis2011computational}  $
Sh[v](x) = \frac{1}{n!} \cdot \sum_{\pi\in S_n} [v(S^{x}_{\pi} \cup \{x\}) - v(S^{x}_{\pi})]$, 
where $S^{x}_{\pi} = \{\pi[ i] | \pi[i] \text{ precedes x in  } \pi\}$.  On the other hand, if $v(S)=\sum_{S} a_{S}v_{S}$ is the veto game decomposition of $v$ then for every $i\in N$ we have 
\begin{equation} 
Sh[v](i)=\sum\limits_{i\in S} \frac{a_{S}}{|S|}
\label{sh-veto}
\end{equation}

We also need to review several particular classes of coalitional games. A \textit{(weighted) dummy game} is a triple $\Gamma=(N,w,v)$ where $w:N\rightarrow [0,\infty)$ is a \textit{weight function} and the characteristic function has the form $v(S)=\sum_{i\in S} w(i)$.  

A \textit{cooperative skill game} (CSG) \cite{bachrach2008coalitional,bachrach2013computing} is a 4-tuple $\Gamma=(N,Sk,T,v)$, where $N$ is a set of \textit{players}, $Sk$ is a set of \textit{skills}, $T$ is a set of \textit{tasks}, and $v$ is a characteristic function. 
We assume that each player $x\in N$ is endowed with a set of skills $Sk_{x}\subseteq Sk$. We extend this notation from players to coalitions by denoting, for every $S\subset N$, $Sk_{S}:= \cup_{x\in S} Sk_{x}$. On the other hand, each task $t\in T$ is identified with a set of skills $T_{t}\subseteq Sk$, the set of skills needed to complete task $T_{t}$. Finally, each task $T_{t}$ has a \textit{profit} $w_{t}\geq 0$. The \textit{value of a coalition} $S\subset N$ is defined as $v(S)=\sum\limits_{t\in T: T_{t}\subseteq Sk_{S}} w_{t}.$ In other words: the value of a coalition $S$ is the sum of profits of all tasks that require only skills possessed by members of $S$. We will actually slightly extend the framework from \cite{bachrach2008coalitional,bachrach2013computing} by requiring that \textbf{tasks are multisets} (rather than sets) \textbf{of skills.} Skillsets are still required to be ordinary sets, but the condition $T_{t}\subseteq Sk_{S}$ is now considered as a multiset inclusion.  A justification for this extension is given by the following example:

\begin{example}\label{ex1} We build upon a scenario from \cite{michalak2015defeating} based on the  9/11 terrorist network initially reconstructed in \cite{krebs2002mapping}. In addition to ordinary nodes (displayed as white circles), some nodes are endowed with one of two skills: M ("martial arts", displayed as yellow squares), P ("pilot", displayed as grey diamonds) (see Figure~\ref{fig911} (a)). A coalition of nodes could execute a hijacking attack iff it contains at least two agents with capability $M$ and one agent with capability $P$. This description maps easily onto an (extended) CSG with a single task, specified as the multiset of skills $\{M,M,P\}$, with profit 1, i.e. a coalition is winning iff it contains at least two M members and at least one P member.\footnote{For ease of exposition/computation we leave out from the specification of our example a condition which was crucial in~\cite{michalak2015defeating}, that an attacking coalition be connected, i.e. that the CSG game is a connectivity game \cite{amer2004connectivity} or a Myerson game \cite{myerson1977graphs}. With additional technical complications along the lines of \cite{skibski2019enumerating} one can probably incorporate this condition into our example as well.} 
\end{example} 

We will denote by $P(s)$ the set of players having a certain skill $s$. 

\textit{Semivalues} \cite{dubey1981value} generalize the well-known concepts of Shapley and Banzhaf index. Given coalitional game $\Gamma$ and $C\in C(\Gamma)$, denote by $MC(C,i):=v(C\cup \{i\})-v(C)$ the \textit{marginal contribution} of player $i$ to coalition $C$. Consider a function $\beta:\{0,\ldots, n-1\}\rightarrow [0,1]$ satisfying 
$\sum_{i=0}^{n-1} \beta(k)=1$. Given semivalue $\beta$, the semivalue $\phi_{i}(v)$ for player $i$ in cooperative game $v$ is $
\phi_{i}(v)=\sum_{k=0}^{n-1} \beta(k)\cdot \mathbb{E}_{C\in C_{k}}[MC(C,i)].$ For $\beta^{Sh}(k)=1/n$ we recover the Shapley value. 
When $\beta^{Ban}(k)=\frac{1}{2^{n-1}}{{n-1}\choose {i} }$ we obtain the Banzhaf index. Another important case is the trivial semivalue $\beta^{triv}(0)=1$, $\beta^{triv}(i)=0$ otherwise. Finally, family of semivalues $\beta=(\beta_{n})$  is called \textit{polynomial time computable} if the two-argument function $(n,k)\rightarrow (\beta_{n})_{k}$ has this complexity.  

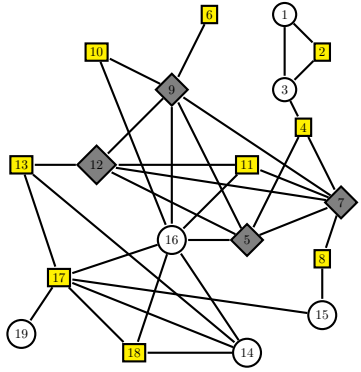
\begin{figure}[h]
\begin{center}
\scalebox{0.50}{

\begin{tikzpicture}[shorten >=1pt, auto, node distance=3cm, ultra thick]
    \tikzstyle{node_ordinar} = [circle, draw=black,fill=white!]
    \tikzstyle{node_pilot} = [diamond, draw=black, fill=gray!]
    \tikzstyle{node_karatist} = [rectangle, draw=black, fill=yellow!]

    \tikzstyle{edge_style} = [draw=black, line width=2, ultra thick]
    \node[node_ordinar] (v1) at (3,5) {1};
    \node[node_ordinar] (v3) at (3,3) {3};
    \node[node_ordinar] (v14) at (2,-4) {14};
    \node[node_ordinar] (v15) at (4,-3) {15};
    \node[node_ordinar] (v16) at (0,-1) {16};
    \node[node_ordinar] (v19) at (-4,-3.5) {19};
    
    \node[node_pilot] (v5) at (2,-1) {5};
    \node[node_pilot] (v7) at (4.5,0) {7};
    \node[node_pilot] (v9) at (0,3) {9};
    \node[node_pilot] (v12) at (-2,1) {12};
    
    \node[node_karatist] (v2) at (4, 4) {2};
    \node[node_karatist] (v4) at (3.5,2) {4};
    \node[node_karatist] (v6) at (1,5) {6};
    \node[node_karatist] (v8) at (4,-1.5) {8};
    \node[node_karatist] (v10) at (-2,4) {10};
    \node[node_karatist] (v11) at (2,1) {11};
    \node[node_karatist] (v13) at (-4,1) {13};
    \node[node_karatist] (v17) at (-3,-2) {17};
    \node[node_karatist] (v18) at (-1,-4) {18};

    \draw[edge_style]  (v1) edge (v2);
    \draw[edge_style]  (v1) edge (v3);
    \draw[edge_style]  (v2) edge (v3);
    \draw[edge_style]  (v3) edge (v4);
    \draw[edge_style]  (v4) edge (v5);
    \draw[edge_style]  (v4) edge (v7);
    \draw[edge_style]  (v5) edge (v7);
    \draw[edge_style]  (v5) edge (v9);
    \draw[edge_style]  (v5) edge (v12);
    \draw[edge_style]  (v5) edge (v16);
    \draw[edge_style]  (v6) edge (v9);
    \draw[edge_style]  (v7) edge (v8);
    \draw[edge_style]  (v7) edge (v9);
    \draw[edge_style]  (v7) edge (v11);
    \draw[edge_style]  (v7) edge (v12);
    \draw[edge_style]  (v8) edge (v15);
    \draw[edge_style]  (v9) edge (v10);
    \draw[edge_style]  (v9) edge (v12);
    \draw[edge_style]  (v9) edge (v16);
    \draw[edge_style]  (v10) edge (v16);
    \draw[edge_style]  (v11) edge (v12);
    \draw[edge_style]  (v11) edge (v16);
    \draw[edge_style]  (v12) edge (v13);
    \draw[edge_style]  (v13) edge (v14);
    \draw[edge_style]  (v13) edge (v17);
    \draw[edge_style]  (v14) edge (v16);
    \draw[edge_style]  (v14) edge (v17);
    \draw[edge_style]  (v14) edge (v18);
    \draw[edge_style]  (v15) edge (v17);
    \draw[edge_style]  (v16) edge (v17);
    \draw[edge_style]  (v16) edge (v18);
    \draw[edge_style]  (v17) edge (v18);
    \draw[edge_style]  (v17) edge (v19);

    \end{tikzpicture}}
\end{center}
\caption{The 9/11 WTC attack social nework (after \cite{krebs2002mapping}, with skills assigned by \cite{michalak2015defeating}).}
\label{fig911}
\end{figure} 

We note the following very simple result:  
 \begin{lemma}  
 If $\mbox{  }\Gamma=(N,w,v)$ is a weighted dummy game and $\beta$ is a semivalue then for every $i\in N$, $\phi_{i}(v)=w(i)$. 
\end{lemma}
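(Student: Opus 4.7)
The plan is straightforward. The key observation is that in a weighted dummy game the characteristic function $v(S) = \sum_{i \in S} w(i)$ is additive, so for any coalition $C$ with $i \notin C$ we have
\[
MC(C,i) = v(C \cup \{i\}) - v(C) = \sum_{j \in C \cup \{i\}} w(j) - \sum_{j \in C} w(j) = w(i),
\]
a quantity that does not depend on $C$ at all.

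From here the result follows by a one-line computation. Plugging this into the semivalue formula, the inner expectation collapses:
\[
\mathbb{E}_{C \in C_k}[MC(C,i)] = w(i)
\]
for every $k \in \{0, 1, \ldots, n-1\}$ (where $C_k$ ranges over coalitions of size $k$ not containing $i$). Therefore
\[
\phi_i(v) = \sum_{k=0}^{n-1} \beta(k) \cdot w(i) = w(i) \cdot \sum_{k=0}^{n-1} \beta(k) = w(i),
\]
using the normalization condition $\sum_{k=0}^{n-1} \beta(k) = 1$ from the definition of a semivalue.

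There is no real obstacle here; the statement is essentially a direct consequence of additivity of $v$ together with the normalization of $\beta$. The only subtlety worth noting is the convention regarding whether $C$ contains $i$ in the expectation, but in either reading ($MC(C,i) = 0$ when $i \in C$, or $C$ ranges over coalitions avoiding $i$) the dummy-game structure makes the marginal contribution constantly $w(i)$ on the relevant support, and the semivalue normalization finishes the computation.
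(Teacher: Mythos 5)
Your proof is correct and is exactly the intended argument: the paper states this lemma without proof (calling it ``a very simple result''), and the additivity of $v$ forcing $MC(C,i)=w(i)$ for every $C$, combined with the normalization $\sum_{k}\beta(k)=1$, is the standard one-line verification. Your remark about the convention on whether $C$ may contain $i$ is a reasonable precaution but does not affect the conclusion.
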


\noindent\textbf{Graph Theory and Network Centralities.} 
A \textit{graph} is a pair $G=(V,E)$ with $V$ a set of \textit{vertices} and $E$ a set of \textit{edges}. The degree of $v$, $deg(v)$ is the number of nodes $v$ is connected to by edges. We will use $\Delta$ to denote the maximum degree of a node in $V$. If $v\in V$ is a vertex we will denote by $N(v)$ the set of neighbors of $v$ in $G$ and by $\hat{N}(v)=N(v)\cup \{v\}$.  We extend these definitions to sets $S\subseteq V$ by $N(S)=\{z\mbox{ }|\mbox{ }\exists w\in S, (z,w)\in E\}$. 

We will denote by $\mathcal{G}^{V}$ the set of all graphs on the vertex set $V$.  A \textit{centrality index} is a function, $c : \mathcal{G}^V \rightarrow \mathbb{R}^V$  that assigns to every node $v \in V$ a real number, called \textit{the centrality of $v$} quantifying the importance of node $v$ in $G$. We will denote by $\mathcal{C}^{V}$ the set of all centrality measures on the set $V$.  We will usually drop $V$ from our notation and write $\mathcal{G}, \mathcal{C}, \ldots$  instead of $\mathcal{G}^V, \mathcal{C}^{V}$ and so forth. 

We will use several concrete measures of centrality. The following is a listing of some of them: 
\begin{itemize} 
\item[-] \textit{degree centrality} of node $v$ in graph $G$ is defined by $c^{D}(v,G)=|\{(v,u)\in E | u\in E\}|$. 
\item[-] \textit{betweenness centrality} of node $v$ in graph $G$ is defined as follows: given two distinct nodes $z_{1},z_{2}\in V$, denote by $p(z_{1},z_{2})$ the number of shortest paths in $G$ between $z_{1}$ and $z_{2}$, and by $p(z_{1},z_{2},v)$ the number of shortest paths \textit{passing through $v$}. Now we can define betweenness centrality as 
$c^{close}(v,G)=\sum_{z_{1}\neq z_{2}\in V} \frac{p(z_{1},z_{2},v)}{p(z_{1},z_{2})}$. 
\item[-] \textit{game-theoretic network centrality} of node $x$ in graph $G$ is defined as the Shapley value of $x$ in game $\Gamma$ with characteristic function $v_{*}(S)=|S\cup N(S)|.$
\item[-] the \textit{eigenvector centrality} of node $v$ in graph $G$ is defined as the $v$'th component of the eigenvector associated to the largest eigenvalue of the adjacency matrix of $G$. 
\end{itemize} 

\noindent\textbf{Coalitional Network Centralities.}
Following \cite{skibski2017axiomatic}, a \textit{representation function} is a function $\psi$ mapping every graph $G=(V,E)$ onto a cooperative game $\Gamma_{G}$ whose players are the vertices of $G$, $\Gamma_{G}=(V,v_{G})$. We will call a representation w \textit{skill-based} if for every graph $G$, the associated game $\Gamma_{G}$ is a CSG.  A \textit{coalitional centrality measure} is a pair $(\psi,\phi)$, where 
$\psi$ is a representing function and $\phi$ is a solution concept. A \textit{skill-based centrality measure} is one for which representation $\psi$ is skill-based. Given semivalue $\beta$, a \textit{$\beta$-skill-based centrality measure} is a pair $(\psi,\phi)$ where $\psi$ is a skill-based and $\phi$ is the semivalue induced by $\beta$.  A skill-based centrality measure is \textit{trivial} iff the solution concept $\phi$ is simply the value function of the CSG associated to graph $G$ by $\phi$. Note that for weighted dummy games this is equivalent to requiring that $\phi$ is the semivalue induced by the trivial semivalue $\beta^{triv}$. 

\noindent\textbf{Parameterized Complexity.} A \textit{parameterized problem} is specified by a set of pairs $W\subseteq \Sigma^{*}\times  \mathbb{N}$ and a function $f:A\rightarrow \mathbb{N}$. Problem $(W,f)$ is \textit{fixed-parameter tractable} if there exists a computable function $g:\mathbb{N}\rightarrow \mathbb{N}$, an integer $r>0$ and an algorithm $A$ that computes $f(z)$ on inputs $z=(x,k)$ from $W$ in time $O(g(k)\cdot |x|^{r})$.

\section{Universality of Skill-Based Centralities.} 
\label{univ} 

\cite{skibski2017axiomatic} have shown that any centrality measure is equivalent to a coalitional centrality measure. We make this result slightly more precise: the cooperative game can be taken to be a CSG and the solution concept can be induced by any arbitrary semivalue:  

\begin{theorem} 
For every semivalue $\beta$ and every centrality $c\in \mathcal{C}$ there exists an equivalent $\beta$-skill-based  representation. 
\label{thm-univ}
\end{theorem}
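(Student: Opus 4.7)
The plan is to exploit the preceding lemma on weighted dummy games: if we can realize, for each graph $G$, a CSG whose characteristic function coincides with a weighted dummy game whose weights are the centrality values $c(v,G)$, then any semivalue $\beta$ will automatically return those weights, giving us an equivalent $\beta$-skill-based centrality measure.

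Concretely, given $\beta$ and $c$, define the representation $\psi$ as follows. For each graph $G=(V,E)$, let the CSG $\Gamma_G=(V,Sk,T,v_G)$ have a \emph{private skill per vertex}: $Sk=\{s_v : v\in V\}$ with $Sk_v=\{s_v\}$ for each $v\in V$; and a \emph{private task per vertex}: $T=\{t_v : v\in V\}$ with $T_{t_v}=\{s_v\}$ and profit $w_{t_v}=c(v,G)$. Since $Sk_S=\{s_v : v\in S\}$, the multiset inclusion $T_{t_v}\subseteq Sk_S$ reduces to $v\in S$, so
\[
v_G(S)=\sum_{v\in S} c(v,G),
\]
which is precisely a weighted dummy game with weights $w(v)=c(v,G)$.

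The heart of the argument is then a one-line invocation of the dummy-game lemma: for every semivalue $\beta$ and every $v\in V$, $\phi_v(v_G)=w(v)=c(v,G)$. Thus $(\psi,\phi_\beta)$ is a $\beta$-skill-based representation equivalent to $c$.

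The only subtle point, and what I would expect to be the main obstacle, is the sign convention on profits: the CSG definition above stipulates $w_t\geq 0$, whereas $c$ in full generality takes values in $\mathbb{R}$. This is a cosmetic rather than substantive issue, because the dummy-game lemma and the linearity of semivalues do not rely on non-negativity; if one wishes to stay strictly within the non-negative-profit framework it suffices either to restrict attention to $c\in \mathcal{C}$ with non-negative values (which covers all standard network centralities listed in the preliminaries) or to represent $c$ as the difference of its positive and negative parts and build two CSGs whose semivalues are subtracted via the additivity of $\phi_\beta$. Either route yields the stated equivalence.
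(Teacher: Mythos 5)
Your construction is exactly the paper's: a private skill and a private task per vertex with profit $c(v,G)$, so that $\Gamma_G$ is the weighted dummy game $v_G(S)=\sum_{v\in S}c(v,G)$, after which Lemma~1 gives $\phi_v(v_G)=c(v,G)$ for any semivalue $\beta$. Your remark on the non-negativity of task profits is a small point the paper passes over silently, but the route is the same.
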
 

\begin{proof} 
Let  $c\in \mathcal{C}$ be a centrality measure on graph $G$. Consider the \textit{dummy game} in which $v(S)=\sum_{v\in S} c(v)$. 

One can represent this dummy game by associating to $G$ the CSG game $\Gamma$ as follows: $Sk=V$, (i.e. skills correspond to agents). For every $v\in V$ we define $w(v)=c(v)$. Finally $T=V$. This yields a skill-based representation $\psi_{C}$. Completing this representation by the trivial semivalue induced by $\beta$ in $\Gamma$ yields a $\beta$-skill-based centrality measure which (by Lemma 1) is easily seen to be equivalent to $c$. 
\end{proof} 

\begin{example} \textbf{[Degree centrality:]}
Consider a graph $G=(V,E)$. We associate to $G$ a game $\Gamma$ as follows: skills correspond to edges of $G$. A node of $G$ has a skill $e$ iff it is incident with $e$. Tasks correspond to edges as well. 
\end{example} 

Sometimes, as the following example shows, the "natural" representation of centralities using CSG is inefficient, as the number of tasks may be exponential in the size of graph $G$.

\begin{example} \textbf{[Betweenness centrality]:}  
Consider a graph $G=(V,E)$. Associate to $G$ a game $\Gamma$ as follows: skills correspond to edges of $G$. A node of $G$ has a skill $e$ iff it is incident with the corresponding edge. Tasks correspond to shortest paths connecting two nodes, say $z_{1},z_{2}$ in $G$. Such a task has weight equal to the inverse of the number of shortest paths between $z_{1},z_{2}$. The trivial skill-based centrality measure coincides with (ordinary) betweenness centrality. 
\label{bc}
\end{example}

\subsection{Limitations of (Rational) Network Centralities.} 

In Theorem~\ref{thm-univ} we were, in some sense, "cheating", as the values of network centrality were built in the weights the dummy game representing the measure. In particular this game depended on the graph $G$, not only on $n$, the number of vertices. It is natural to ask whether universality fails  once we impose some further restrictions on the framework that precludes such "pathological" representations. 


In the sequel we study an interesting and natural restriction on characteristic functions and centrality measures: that they are "rational functions of the graph topology", i.e. a quotient of two polynomials. 
We formalize this idea as follows: Given a set of vertices $V$, denote by $E_{n}(V)$ the set of subsets 
$w=\{v_{1},v_{2}\}$ of distinct vertices in $V$. Associate to every $w\in E_n(V)$ a boolean variable $X_{w}$. We can interpret the set $E$ of edges of any graph $G$ on $V$ as a 0/1 assignment $(X_{w})_{w\in E_{n}(V)}\in \{0,1\}^{E_{n}(V)}$,   $X_{w}=1$ if $w\in E$, $X_{w}=0$, 
otherwise.  By forcing notation we will write $E$ instead of $(X_{w})_{w\in E_{n}(V)}\in \{0,1\}^{E_{n}(V)}$. We do similarly for vertices, identifying a vertex $i$ with a boolean variable $Y_{v}$. 
This way we can specify a set of vertices $S$ by a boolean vector, corresponding to those vertices $v$ with $Y_{v}=1$. 

\begin{definition} 
A family of characteristic functions $(v_{n})_{n\geq 1}$ is called {\rm rational} if there exists two families of polynomials 
$P_{n}(X_{e},Y_{v})$ and $Q_{n}(X_{e},Y_{v}) \in \mathbb{Q}[X,Y]$ such that for every
$n\geq 1$ and $S\subseteq [n]$
\begin{equation} 
v_{n}(S)= \frac{P_{n}[E,S]}{Q_{n}[E,S]}
\end{equation}  
\end{definition}

\begin{example} 
For every $S\subseteq [n]$ characteristic functions $v_{S}$ are rational. Indeed $v_{S}(\cdot)=\prod_{i\in S} Y_{i}.$
\label{veto-rat}
\end{example} 

\begin{example} Characteristic function $v_{*}$ in the definition of game-theoretic network centrality is rational. 
Indeed, 
\[
v_{*}(S)=\sum_{i\in [n]} Y_{i}+ \sum_{i\in [n]} (1-Y_{i})[1-\prod_{j\neq i} (1- Y_{j}X_{i,j})]
\]

To see that this equality is true: each $i\in S$ contributes 1 to the first sum. Only $i\not \in S$ may contribute to the second sum, but only when some term $1-Y_{i}X_{i,j}$ is equal to zero, that is when there is some $j\neq i$ with $Y_{j}=1$ (i.e. $j\in S$) and $X_{i,j}=1$ (i.e. $(i,j)\in E$). 
\end{example}

\begin{definition} 
Let $V$ be a set of vertices. 
A centrality measure $c=(c_{n})_{n\geq 1}$ is \textrm{rational} iff there exist multivariate polynomials $P_{n,v},Q_{n}\in \mathbb{Q}[X]$  such that, for every $n\geq 1$, and every graph $G=(V,E)$ on $V$ we have 
\begin{equation} 
c_{n}(v,G)=\frac{P_{n,v}[E]}{Q_{n}[E]}
\end{equation} 
\end{definition} 

\begin{example} 
Degree centrality is rational. Indeed, one may take $P_{n,v}[X]=\sum\limits_{e\ni v}X_{e}$ 
and $Q_{n}[X]=1.$
\end{example}

The case of betweenness centrality is  more interesting: 

\begin{theorem} 
Betweenness centrality is rational. 
\end{theorem}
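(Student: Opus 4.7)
The plan is to build an explicit rational representation of betweenness centrality from the walk-counting polynomials $N_\ell(u,w;X) = (A(X)^\ell)_{u,w}$, where $A(X)$ is the symmetric $n\times n$ matrix with $A_{ij}(X) = X_{\{i,j\}}$ for $i\neq j$ and zero on the diagonal. Each $N_\ell$ is a polynomial in the edge variables whose evaluation on a graph $G$ equals the number of walks of length $\ell$ between its endpoints. The elementary observation driving the argument is that in an unweighted graph every shortest walk between two distinct vertices is automatically a simple path; hence $p(z_1,z_2) = N_{d(z_1,z_2)}(z_1,z_2;E)$, and whenever $d(z_1,v) + d(v,z_2) = d(z_1,z_2)$ one has $p(z_1,z_2,v) = N_{d(z_1,v)}(z_1,v;E)\cdot N_{d(v,z_2)}(v,z_2;E)$ (and $0$ otherwise).

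The difficulty is that the exponent $d(\cdot,\cdot)$ itself depends on the graph. First I would get rid of it by enumerating all admissible lengths $\ell \in \{1,\ldots,n-1\}$, each weighted by a polynomial indicator $I_\ell(u,w;X) = [\, d_G(u,w) = \ell\,]$. Since $I_\ell$ is a $\{0,1\}$-valued function of $\{0,1\}$-valued inputs, it coincides (by M\"obius/Fourier expansion) with a unique multilinear polynomial over $\mathbb{Q}$; a more hands-on construction is $I_\ell = J_\ell - J_{\ell-1}$, where $J_\ell = 1 - \prod_{k\leq \ell}(1 - Z_k)$ and $Z_k$ is the polynomial indicator of ``$N_k > 0$''. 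Substituting then yields polynomial representations
\begin{equation*}
p(z_1,z_2;X) = \sum_{\ell=1}^{n-1} I_\ell(z_1,z_2;X)\, N_\ell(z_1,z_2;X),
\end{equation*}
\begin{equation*}
p(z_1,z_2,v;X) = \sum_{\ell_1,\ell_2\geq 1} I_{\ell_1}(z_1,v)\, I_{\ell_2}(v,z_2)\, I_{\ell_1+\ell_2}(z_1,z_2)\, N_{\ell_1}(z_1,v)\, N_{\ell_2}(v,z_2).
\end{equation*}

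To obtain a single denominator $Q_n$ independent of $v$, I would set $D(z_1,z_2;X) = p(z_1,z_2;X) + [\,p(z_1,z_2;X)=0\,]$, again exploiting that the ``is zero'' indicator of an integer-valued polynomial on $\{0,1\}$-inputs is itself a polynomial. By construction $D$ is positive on every assignment, agrees with $p$ on the support of $p$, and since $p(z_1,z_2,v;X) = 0$ whenever $p(z_1,z_2;X) = 0$, the ratio $p(z_1,z_2,v;X)/D(z_1,z_2;X)$ realizes the customary $0/0 := 0$ convention. Setting $Q_n(X) = \prod_{z_1 \neq z_2} D(z_1,z_2;X)$ and clearing denominators in the finite sum $\sum_{z_1 \neq z_2;\, z_1,z_2 \neq v} p(z_1,z_2,v;X)/D(z_1,z_2;X)$ produces the required $P_{n,v}, Q_n \in \mathbb{Q}[X]$. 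The main obstacle is the first step -- converting the graph-theoretic predicate ``distance equals $\ell$'' into an explicit polynomial in the 0/1 edge variables; once distance indicators are available the rest is essentially bookkeeping, with the common-denominator trick handling disconnected pairs cleanly.
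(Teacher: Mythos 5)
Your proof is correct, but it takes a genuinely different route from the paper's. The paper works directly with simple paths: it enumerates all simple paths $P$ in the complete graph on $V$, forms the monomial $X_P=\prod_{e\in P}X_e$, and multiplies by $\prod_{|Q|<|P|}(1-X_Q)$ over shorter simple paths to obtain a polynomial $\tilde{X}_P$ that evaluates to $1$ exactly when $P$ is a shortest path of $G$; summing these over paths through $v$ (respectively, all paths) between $z_1,z_2$ gives the numerator and denominator for each pair. You instead count walks via powers of the symbolic adjacency matrix, use the observation that shortest walks are automatically simple, and recover the correct length through polynomial indicators of the predicate ``$d_G(u,w)=\ell$,'' obtained by multilinear interpolation over the Boolean cube. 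Both constructions are valid, and neither is polynomial-size, but the definition of rationality does not require that. Your version is in fact more careful on two points the paper glosses over: the $0/0$ convention for disconnected pairs (your $D(z_1,z_2;X)$ device) and the reduction of the sum of per-pair ratios to a single quotient $P_{n,v}/Q_n$. One remark: the interpolation fact you invoke --- every $\mathbb{Q}$-valued function on $\{0,1\}^{E_n(V)}$ agrees with a multilinear polynomial over $\mathbb{Q}$ --- already yields the theorem outright when applied to $c_n(v,\cdot)$ itself (with $Q_n=1$), so the walk-counting machinery, while explicit and instructive, is strictly speaking unnecessary once that observation is on the table.
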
 
\begin{proof} 
For every set of vertices $V$, define $\mathcal{D}$ to be the family of simple paths in the complete graph on $V$. Given $P\in \mathcal{D}$, define monomials $X_{P}=\prod_{e\in P} X_{e}$
and $
\tilde{X}_{P}= X_{P}\cdot \prod\limits_{\stackrel{Q\in \mathcal{D}}{|Q|<|P|}}(1-X_{Q}). $
 Also, for vertices $z_{1}\neq z_{2}$ define $
P_{n,v,z_{1},z_{2}}[X]=\sum\limits_{P:z_{1}\rightarrow v\rightarrow z_{2}} \tilde{X}_{P}$, $Q_{z_{1},z_{2}}[X]=\sum\limits_{P:z_{1}\rightarrow z_{2}} \tilde{X}_{P}.$
With these notations, we claim that we have the following formula 
\begin{equation}
BC_{n}[v]=\sum_{v_{1}\neq v_{2}\in V} \frac{P_{n,v,z_{1},z_{2}}[X]}{Q_{z_{1},z_{2}}[X]} 
\label{bc2}
\end{equation} 
To prove equation~(\ref{bc2}) we first show that 
\begin{claim} 
Given graph $G=(V,E)$, $\tilde{X}_{P}=1$ iff $P$ is a shortest path in $G$. 
\label{foo} 
\end{claim}
\begin{proof}
$\tilde{X}_{P}=1$ iff $X_{P}=1$ and for all $Q\in \mathcal{D}$, $|Q|<|P|$, $X_{Q}=0$, that is $Q$ is {\bf not} a path in $G$. 
\end{proof} 
Applying Claim~\ref{foo} we infer that $P_{n,v,z_{1},z_{2}}$ count shortest paths between $z_{1},z_{2}$ passing through $v$ and $Q_{z_{1},z_{2}}$ counts all shortest paths between $z_{1},z_{2}$. 
\end{proof}

The following two theorems show that the family of rational centrality measures is reasonably comprehensive:

\begin{theorem} 
Every centrality measure induced by a rational family of characteristic functions is rational. 
\end{theorem}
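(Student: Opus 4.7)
The plan is to unpack the definition of ``induced by'' — which in the paper's context means applying a semivalue (in particular, the Shapley value) to the characteristic function — and show that every step of the computation preserves rationality in the edge variables, using a common denominator that is uniform across vertices.

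First I would take a rational family $(v_n)$ with $v_n(S) = P_n[E,S]/Q_n[E,S]$ and observe that, for any fixed subset $T \subseteq [n]$, substituting the indicator vector of $T$ into the $Y$-variables turns $P_n$ and $Q_n$ into polynomials $p_T(X), q_T(X) \in \mathbb{Q}[X]$ in the edge variables alone. Hence each value $v_n(T)$, viewed as a function of the graph, is itself the rational function $p_T(X)/q_T(X)$.

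Next, for the induced centrality (a semivalue $\phi$ with rational coefficients $\beta(k)$, in particular the Shapley value), write
\begin{equation}
\phi_i(v_n) = \sum_{S \subseteq [n]\setminus\{i\}} \alpha_{S,n}\bigl[v_n(S\cup\{i\}) - v_n(S)\bigr],
\end{equation}
where the $\alpha_{S,n}$ are rational constants depending only on $|S|$ and $n$. After the substitution from the previous paragraph, each summand is a difference of two rational functions in $X$, and therefore $\phi_i(v_n)$ is itself a rational function of $X$.

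The only subtlety — and the one step requiring a little care — is that the definition of a rational centrality measure demands a single denominator $Q_n[X]$ that depends on $n$ but \emph{not} on the vertex $i$. To secure this, I would take the common denominator
\begin{equation}
Q_n[X] \;=\; \prod_{T \subseteq [n]} q_T(X),
\end{equation}
which is manifestly independent of $i$. For each fixed $T$, the quotient $Q_n[X]/q_T(X)$ is a polynomial, so clearing denominators in the displayed sum yields
\begin{equation}
P_{n,i}[X] \;=\; \sum_{S \not\ni i} \alpha_{S,n}\Bigl[\,p_{S\cup\{i\}}(X)\!\!\!\prod_{T \neq S\cup\{i\}}\!\!\!q_T(X) \;-\; p_S(X)\!\!\prod_{T \neq S}\!\!q_T(X)\Bigr],
\end{equation}
a polynomial in $\mathbb{Q}[X]$, and by construction $\phi_i(v_n)(E) = P_{n,i}[E]/Q_n[E]$, matching the definition.

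The main obstacle I expect is precisely this uniformity of the denominator; taking the product over all $2^n$ subsets is a blunt but effective remedy (one could shrink it to an LCM for tighter bounds, but this is unnecessary for the existence statement). A secondary, purely notational issue is that the definition of a rational characteristic function allows $Q_n$ to vanish on some assignments $(E,S)$; I would remark that wherever the characteristic function is well-defined, $q_T$ does not vanish on admissible graphs, so the identity above is valid as an identity of rational functions evaluated on the relevant domain.
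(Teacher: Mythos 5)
Your proof is correct and follows exactly the route the paper intends: its entire proof is the one-line remark ``From the marginal contribution formula for the Shapley value,'' and your argument is simply that remark carried out in full, expressing the semivalue as a rational linear combination of the values $v_n(T)$ and clearing denominators. Your extra care about making the denominator $Q_n$ independent of the vertex $i$ is a detail the paper glosses over, and handling it via the product (or LCM) of the $q_T$ is the right fix.
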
 
\begin{proof}
From the marginal contribution formula for the Shapley value.  
\end{proof} 

\begin{corollary}
Game-theoretic network centrality \cite{michalak2013efficient} is a rational centrality measure.  
\end{corollary}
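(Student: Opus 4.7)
The plan is entirely routine: this corollary is an immediate consequence of combining two earlier results, namely the preceding theorem (that every centrality induced by a rational family of characteristic functions is rational) with the example in which we exhibited a rational expression for $v_{*}$. So the proof really amounts to verifying that the definitions line up.

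First I would recall the definition: game-theoretic network centrality assigns to each node $x$ of $G$ the Shapley value of $x$ in the coalitional game with characteristic function $v_{*}(S)=|S\cup N(S)|$. Second, I would point to the earlier example that established that the family $(v_{*})_{n\geq 1}$ is rational, having exhibited the explicit polynomial identity
\[
v_{*}(S)=\sum_{i\in [n]} Y_{i}+ \sum_{i\in [n]} (1-Y_{i})\Bigl[1-\prod_{j\neq i} (1- Y_{j}X_{i,j})\Bigr],
\]
which is a polynomial (and in particular a quotient of polynomials with denominator $1$) in the variables $(X_{e},Y_{v})$. Third, I would simply apply the previous theorem to this rational family, which produces the desired polynomials $P_{n,v}$ and $Q_{n}$ expressing the Shapley value of each vertex $v$ as a quotient of polynomials in the edge indicator variables.

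There is no real obstacle here: the only thing to check, if one wanted to be pedantic, is that the marginal-contribution formula for the Shapley value preserves rationality, which is exactly the content of the preceding theorem. Hence the corollary follows with no additional work.
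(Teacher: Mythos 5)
Your proposal is correct and matches the paper's intended argument exactly: the corollary is stated immediately after the theorem on rational families of characteristic functions, and the paper relies on the earlier example exhibiting the polynomial identity for $v_{*}$ in just the way you describe. Nothing further is needed.
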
 

\begin{corollary}
If $v_{n}$ is a family of characteristic functions whose Harsanyi dividends are rational numbers then the family of centrality measures induced by $v_{n}$ is rational.  
\end{corollary}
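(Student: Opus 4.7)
The plan is to reduce the corollary directly to the preceding theorem (every centrality measure induced by a rational family of characteristic functions is rational) by verifying that any characteristic function family with rational Harsanyi dividends is itself a rational family.

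First, I would invoke the veto-game decomposition recalled in the Preliminaries: for each $n$, write $v_n = \sum_{\emptyset \neq S \subseteq [n]} a_S^{(n)} v_S$, where by hypothesis $a_S^{(n)} \in \mathbb{Q}$ for every $S$. Combined with Example~\ref{veto-rat}, which identifies the veto game $v_S$ with the monomial $\prod_{i\in S} Y_i$ in the vertex-indicator variables, this gives the explicit rational (indeed polynomial) expression
\[
v_n(T) \;=\; \sum_{\emptyset \neq S \subseteq [n]} a_S^{(n)} \prod_{i \in S} Y_i.
\]

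Next, I would set $P_n(X,Y) := \sum_{\emptyset \neq S \subseteq [n]} a_S^{(n)} \prod_{i \in S} Y_i$ and $Q_n(X,Y) := 1$. Since $P_n, Q_n \in \mathbb{Q}[X,Y]$ and $v_n = P_n/Q_n$, the family $(v_n)_{n \geq 1}$ satisfies the definition of a rational characteristic function family (the absence of any dependence on the edge variables $X_e$ is permitted). Applying the previous theorem then yields rationality of the induced centrality measure.

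I do not foresee any real obstacle: the argument is essentially a rewriting exercise. The only point worth flagging is that $P_n$ may have up to $2^n - 1$ nonzero terms, but the definition of a rational family imposes only existence of $P_n, Q_n \in \mathbb{Q}[X,Y]$ with no restriction on their syntactic size, so this is immaterial for the statement being proved.
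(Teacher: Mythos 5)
Your proposal is correct and follows essentially the same route as the paper: the paper's one-line proof ("From Example~\ref{veto-rat} and the fact that linear combinations of rational functions with coefficients in $\mathbb{Q}$ are rational") is exactly your argument, namely that $v_n=\sum_S a_S v_S$ with $v_S=\prod_{i\in S}Y_i$ exhibits $v_n$ as a polynomial in $\mathbb{Q}[Y]$, after which the preceding theorem applies. Your write-up merely makes explicit the choice $P_n=\sum_S a_S\prod_{i\in S}Y_i$, $Q_n=1$ and the final appeal to that theorem, which the paper leaves implicit.
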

\begin{proof} 
From Exp.~\ref{veto-rat} and the fact that  linear combinations of rational functions with coefficients in $\mathbb{Q}$ are rational. 
\end{proof}

\begin{theorem}
Every centrality measure induced by a family of CSG with constant coefficients in $\mathbb{Q}$ is rational. 
\label{rational-csg}
\end{theorem}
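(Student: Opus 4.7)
By the preceding theorem (``every centrality measure induced by a rational family of characteristic functions is rational''), it suffices to verify that the family of characteristic functions $(v_n)_n$ associated with the given CSGs is rational in the sense of the previous definition. Fix $n$ and such a CSG on player set $[n]$ with rational weights $w_t$. By the definition of a CSG,
\[
v_n(S) \;=\; \sum_{t \in T} w_t \cdot \mathbb{1}[T_t \subseteq Sk_S],
\]
so the whole problem reduces to writing each indicator $\mathbb{1}[T_t \subseteq Sk_S]$ as a polynomial in the vertex variables $Y_v$ (and, when the CSG structure is itself graph-dependent as in the degree-centrality representation, the edge variables $X_e$ as well).

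For a task $t$ whose multiset of required skills assigns multiplicity $k_{t,s}$ to each skill $s$, the multiset inclusion $T_t \subseteq Sk_S$ is equivalent to the conjunction, over all skills $s$ occurring in $T_t$, of the threshold condition $|S \cap P(s)| \geq k_{t,s}$. Each such condition is a symmetric Boolean function of the $0/1$ variables $\{Y_v : v \in P(s)\}$, and every Boolean function of finitely many $0/1$ inputs admits a unique multilinear polynomial representation over $\mathbb{Q}$: for $k_{t,s}=1$ it is simply $1 - \prod_{v \in P(s)}(1 - Y_v)$, and for higher multiplicities one obtains an analogous expression built out of elementary symmetric polynomials in $\{Y_v\}_{v \in P(s)}$ (or, more concretely, as a sum over $k_{t,s}$-subsets of $P(s)$ of an appropriate multilinear monomial). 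Multiplying these indicators across the skills of $T_t$ and summing over $t$ with coefficients $w_t \in \mathbb{Q}$ exhibits $v_n$ as an element of $\mathbb{Q}[Y]$, which is already a rational function with denominator $1$. Invoking the preceding theorem then yields rationality of the induced centrality.

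The only delicacy I anticipate concerns the multiset-inclusion extension of the CSG framework, but once one observes that every Boolean condition on $0/1$ variables is a multilinear polynomial with rational coefficients, the argument reduces to a transparent assembly. If the CSG structure itself depends on edges, one must additionally check that the graph-dependent part of the player-skill and task-skill assignments is expressible polynomially in $X_e$; the natural representations appearing in the paper (for example ``$v$ has skill $e$ iff $v$ is incident with $e$'', corresponding to the monomial $X_e$) satisfy this trivially, so no further obstacle arises.
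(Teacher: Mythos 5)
Your proof is correct and follows essentially the same route as the paper's: both reduce by linearity to a single task of rational profit and then exhibit the winning indicator as a multilinear polynomial in the $0/1$ variables $Y_v$, after which the preceding theorem finishes the job. The only cosmetic difference is that the paper writes this polynomial via minimal winning coalitions, as $1-\prod_{A\in W(t)}\bigl(1-\prod_{r\in A}Y_r\bigr)$, whereas you assemble it as a product over skills of per-skill threshold indicators, which handles the multiset extension a bit more explicitly.
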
 

Theorem~\ref{rational-csg} connects rationality to the representation of centrality by  CSG, essentially showing that when we disallow adaptive representations like those used in the proof of Theorem~\ref{thm-univ} induced centrality measures are indeed rational.

Despite these two results, rational measures are \textbf{not} universal; they fail to capture a natural centrality measure: 

\begin{theorem} Eigenvector centrality is not rational. 
\label{counter} 
\end{theorem}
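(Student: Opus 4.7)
\begin{proof-sketch}
The plan is to exhibit an explicit small graph on which eigenvector centrality attains an irrational value, and then to contrast this with the elementary observation that any polynomial with rational coefficients evaluated at a $\{0,1\}$-vector yields a rational number. Since the edge variables $X_{e}$ range over $\{0,1\}$, for every graph $G$ both $P_{n,v}[E]$ and $Q_{n}[E]$ lie in $\mathbb{Q}$, so any rational centrality measure in the sense of the definition can take only rational values (whenever $Q_{n}[E]\neq 0$).

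For the counterexample I would take $n=3$ and the path $P_{3}$ on three vertices with edge set $E=\{\{1,2\},\{2,3\}\}$, encoded as the $\{0,1\}$-vector $(X_{12},X_{13},X_{23})=(1,0,1)$. Its adjacency matrix has characteristic polynomial $\lambda^{3}-2\lambda$, spectral radius $\sqrt{2}$, and Perron eigenvector (unique up to a nonzero scalar) proportional to $(1,\sqrt{2},1)$. The key invariant is the ratio $\sqrt{2}$ between the middle-vertex component and either endpoint component; this ratio is intrinsic to the eigenvector and cannot be altered by any rescaling. Hence under any nondegenerate normalization at least one vertex of $P_{3}$ receives an irrational centrality value: if the middle component is rational then both endpoint components are rational multiples of $1/\sqrt{2}$ and hence irrational, and conversely. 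Substituting $(1,0,1)$ into a putative rational form $P_{3,v}[E]/Q_{3}[E]$ would instead produce a rational number for every $v$, contradicting the irrationality just established.

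The main obstacle I anticipate is exactly the normalization ambiguity hidden in the definition of eigenvector centrality: the argument must dispatch all reasonable conventions uniformly. The ratio argument above accomplishes this, since rescaling the Perron eigenvector by a common factor can make rational only those components whose mutual ratios are already rational, and $1$ and $\sqrt{2}$ are not $\mathbb{Q}$-commensurable. All remaining steps --- the spectral computation for $P_{3}$, the handling of the (rare) case $Q_{3}[(1,0,1)]=0$ (which would itself violate that the rational form represents the centrality on $P_{3}$), and the closure of $\mathbb{Q}$ under rational-function evaluation on rational inputs --- are routine.
\end{proof-sketch}
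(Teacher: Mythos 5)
Your proposal is correct and follows essentially the same route as the paper: both use the three-vertex path ($P_{2}$ in Figure~\ref{p2}(a)), compute its Perron eigenvector proportional to $(1,\sqrt{2},1)$, and conclude that the resulting irrational centrality value contradicts the fact that a rational centrality measure evaluated on a $\{0,1\}$ edge-assignment must yield a rational number. Your extra care with the normalization ambiguity (arguing via the $\mathbb{Q}$-incommensurable ratio of components) is a small but genuine strengthening over the paper's version, which simply asserts specific normalized values.
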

\begin{proof} 
Consider the graph $G$ from Figure~\ref{p2} (a). Simple computations show that eigenvector centrality values of node $1,3$ are equal to $\frac{1}{\sqrt{2}+1}$ and that of $2$ is equal to $\frac{\sqrt{2}}{\sqrt{2}+1}$. In particular the eigenvector centrality of $1,3$ is an irrational number. But this would be impossible if the eigenvector centrality were a rational centrality measure. 
\end{proof} 

It would be interesting to define an extension of the family of rational centralities that captures all "natural" centralities. 

\section{Parameterized Complexity of Computing Skill-Based Centralities. }
\label{param}

Since computing the Shapley value of CSG is $\#P$-complete \cite{aziz2009algorithmic}, it follows that computing skill-based centralities is intractable in general. On the other hand, by imposing a natural restriction on the family of CSG games under consideration, that of an existence of an upper bound on the largest set of skills needed for a task, we get a fixed-parameter tractable class of algorithms:

\begin{theorem} 
 Let $\beta$ be a poly-time computable family of semivalues. The following problem 
\begin{itemize} 
\item[- ][INPUT:] A CSG $\Gamma=(N,v)$ and a player $i\in N$.
\item[-][TO COMPUTE:] Semivalue $\Phi_{i}^{\beta}(v).$
\end{itemize} 
parameterized by $k$, the cardinality of the largest skill set required by any task, is fixed parameter tractable. 
\label{pc} 
\end{theorem}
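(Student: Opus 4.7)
The plan is to exploit the fact that the characteristic function of a CSG decomposes linearly over its tasks, and that for each individual task the set of ``relevant'' skills has size at most $k$, which will let us reduce the semivalue computation to an inclusion--exclusion over subsets of a small skill set. Since any semivalue is linear in the characteristic function, I would first write $v = \sum_{t \in T} w_t \cdot v_t$, where $v_t(S) = 1$ iff the skills of $S$ cover the multiset $T_t$. Then $\Phi_i^\beta(v) = \sum_{t \in T} w_t \, \Phi_i^\beta(v_t)$, so it suffices to compute $\Phi_i^\beta(v_t)$ for each task in FPT time in $k$ and then aggregate.

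Fix a task $t$. Split its skill (multi)set as $A_t = T_t \setminus Sk_i$ and $B_t = T_t \cap Sk_i$. For a coalition $C \subseteq N \setminus \{i\}$, the marginal contribution $MC_t(C,i) = v_t(C \cup \{i\}) - v_t(C)$ is $1$ if and only if (i) every skill in $A_t$ is covered by some player of $C$, and (ii) at least one skill in $B_t$ is \emph{not} covered by any player of $C$; otherwise it is $0$. (In particular $MC_t \equiv 0$ whenever $B_t = \emptyset$.) Hence the count of coalitions $C$ of size $\ell$ with $MC_t(C,i) = 1$ equals
\[
N_t(\ell) \;=\; \#\{C : |C| = \ell, \, C \text{ covers } A_t\} \;-\; \#\{C : |C| = \ell, \, C \text{ covers } T_t\}.
\]

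For each $D \subseteq T_t$, the number of coalitions $C \subseteq N \setminus \{i\}$ of size $\ell$ that cover \emph{no} skill in $D$ is $\binom{n_D}{\ell}$, where $n_D = |(N \setminus \{i\}) \setminus \bigcup_{s \in D} P(s)|$. By inclusion--exclusion the number of coalitions of size $\ell$ covering every skill in a fixed set $F \subseteq T_t$ equals $\sum_{D \subseteq F} (-1)^{|D|} \binom{n_D}{\ell}$. Applying this to $F = A_t$ and $F = T_t$ yields a closed-form for $N_t(\ell)$ as a signed sum of at most $2 \cdot 2^{|T_t|} \leq 2^{k+1}$ binomial coefficients, each computable in polynomial time. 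Finally
\[
\Phi_i^\beta(v_t) \;=\; \sum_{\ell=0}^{n-1} \beta(\ell) \cdot \frac{N_t(\ell)}{\binom{n-1}{\ell}},
\]
and the hypothesis that $\beta$ is polynomial-time computable makes each term available in polynomial time. Summing over the $|T|$ tasks gives an overall bound of $O(2^k \cdot |T| \cdot n \cdot \mathrm{poly}(n))$, which is FPT in $k$.

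The only step that requires care is the characterization of $MC_t(C,i)$ and the correctness of the two inclusion--exclusion sums; once the ``covers $F$'' event is correctly identified with intersecting $\bigcup_{s\in D} P(s)$ for each $D\subseteq F$, the bound $|T_t|\le k$ is exactly what keeps the number of terms bounded by a function of $k$ alone, delivering fixed-parameter tractability.
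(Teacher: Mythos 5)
Your overall skeleton --- linearity over tasks, characterizing the marginal contribution of $i$ for a single task, counting coalitions of each size $\ell$, and weighting by $\beta(\ell)$ --- is the same as the paper's. The genuine gap is that your counting step silently assumes tasks are \emph{sets} of skills, whereas the CSGs of this paper (and hence the theorem) explicitly take tasks to be \emph{multisets}: a task may require, say, two copies of skill $M$, so that satisfying the $M$-requirement means containing at least two distinct players from $P(M)$ (Example~\ref{ex1}, with task $\{M,M,P\}$, is precisely this situation). Your inclusion--exclusion identity counts coalitions that intersect $P(s)$ for every $s \in F$, i.e.\ that supply at least \emph{one} copy of each required skill; the term $\binom{n_D}{\ell}$ counts coalitions supplying \emph{zero} copies of each $s \in D$, which is not the complement of ``supplies at least $k_{s,t}$ copies'' once some multiplicity $k_{s,t} \geq 2$. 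So $N_t(\ell)$ as you compute it is wrong for any task with a repeated skill, and your characterization of $MC_t(C,i)$ must likewise be restated in terms of multiplicities (player $i$ contributes at most one copy of each of its skills, so it can close at most a deficit of one per skill).

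The paper sidesteps this by not seeking a closed form: it runs a dynamic program whose states are the (at most $2^k$) \emph{submultisets} of $T_t$, recording for each size $r$ the number of coalitions whose skill multiset, truncated to $T_t$, equals a given submultiset, with a recurrence that adds one player at a time. Your argument can be repaired in the same spirit --- replace the single inclusion--exclusion by a count of coalitions achieving each vector of per-skill multiplicities capped at the thresholds, which again has at most $\prod_{s}(k_{s,t}+1) \leq 2^{k}$ cells --- but as written the binomial-coefficient formula does not prove the theorem for the multiset CSGs it is stated for. (Restricted to classical set-valued CSGs, your argument is correct and arguably cleaner than the DP.)
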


\section{An Application: Helping Centralities.}
\label{helping} 

In this section we give an application of the idea of representing network centralities by CSG. \cite{alshebli2019measure} have recently defined (using different ideas) a centrality measure that quantifies the extent to which a given agent adds value to a group. On the other hand, an agent may be valuable to a group even when it lacks the skills to contribute to completing a given task, provided it is capable to \textit{enlist neighbors with such skills}. 

\begin{example} 
Members of the program committee for computer science conferences often use subreviewers to referee papers. Each paper needs to receive a minimal number (say three) of reviews. A PC member may lack the \textbf{skill} to competently review the paper itself. But the ability it may have to \textrm{help the reviewing process} by enlisting subreviewers with the required reviewing skills, in order to complete the \textbf{task} of getting three reviews for the given paper, is highly valuable. 
\end{example}

\begin{example} 
Consider again the coalitional game-theoretic framework for the WTC 9/11 terrorist network in Example~\ref{ex1}. Nodes $4,7,9,11,12,13,16$ could have assembled an attacking team consisting of (some of) their neighbors. In the case of node $16$ (N. Alhazmi) this happens despite not being known to have had any of the two required skills $P,M$. Because of this fact, node 16 intuitively can "help" all non-winning coalitions (which may already include it !) by enlisting its neighbors. This is intuitively,  not true for nodes $4,7,9,11,12,13:$ they do not "help" those coalitions that were already turned into winning coalitions by their mere joining. Neither do any other nodes in the network. So, intuitively, node 16 should be the "most helping" node. 
\end{example} 

It turns out that properly defining such a centrality notion is somewhat subtle and may not have a single, always best solution. The following could be the natural first idea: 

\begin{definition} Given coalitional game $\Gamma$ and graph $G$, we define \textit{the centrality extension of $\Gamma$ on graph $G$} as the game $\Gamma_{1}$ with value function $v_{cen}(S)=v(S\cup N(S))$. 
\end{definition} 

\begin{example} 
Consider the dummy game $v(S)=|S|$. Then for every graph $G$ the Shapley value of the centrality extension $v_{cen}(S)$ of $v$ on graph $G$ is exactly the \textit{game-theoretic centrality} of $G$ \cite{suri2008determining}. 
\label{gtcentral}
\end{example} 

Since we want to disentangle a player's capability to help solve tasks itself from its capability to get help from its neighbors, defining helping centrality would require subtracting the Shapley value of $v$ from the Shapley value of its centrality extension $v_{cen}$. However, this  is problematic: since $v(N)=v_{cen}(N)$ and the Shapley values of $v_{cen},v$ add up to $v(N)$, quantity $Sh[v_{cen}](i)-Sh[v](i)$ will be negative for some players $i$ (unless $Sh[v]=Sh[v_{cen}]$,  in which case the centrality would be identically zero) ! 

A variation of this  approach works, however, sometimes: instead of subtracting $Sh[v]$, only subtract a "scaled down" version of this quantity. That is, define helping centrality as (a multiple of) $Sh[v_{cen}]-\alpha Sh[v]$ for some  suitable $\alpha \in (0,1)$. An appropriate choice seems to be the following
\begin{definition} 
For CSG game $\Gamma$ on graph $G$ define
\[
\Delta^{*}=\max_{\stackrel{t\in T}{s\in T_{t}}}\{|N(C)\setminus C|: C\subseteq P(s), |C|=n_{s}-k_{s,t}+1\}
\]
where $n_{s}$ is the number of players having skill $s$ and $k_{s,t}$ is the number of copies of skill $s$ needed to accomplish task $t$. 
\end{definition} 
Note that $\Delta^{*}=\Delta$ in games where $n_{s}=k_{s,t}$ for every task $t\in T$ and $s\in T_{t}$, in particular for games where each skill is possessed by a single player. 

\begin{definition} 
Define the \textit{Shapley-based helping centrality} of a node $x\in V$ in a game $\Gamma=(N,v)$ on graph $G$ by 
\begin{align}
HC(x)=(1+\frac{1}{\Delta^{*}})[Sh[v_{cen}](x)-\frac{Sh[v](x)}{\Delta^{*}+1}]= \nonumber \\ = (1+\frac{1}{\Delta^{*}})Sh[v_{cen}](x)- \frac{1}{\Delta^{*}}Sh[v](x)
\end{align}
\end{definition} 

This definition is sensible in at least the following three settings (Theorem~\ref{one}, stated later, offers yet another one):

\begin{theorem} 
Consider a CSG game $\Gamma=(V,v)$ in which every player $x$ possesses at most one skill. Then for every player $x$, $HC(x)\geq 0$ and $\sum\limits_{y\in V} HC(y)=1$. 
\label{positive-helping} 
\end{theorem}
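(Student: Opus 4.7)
The sum identity follows immediately from efficiency of the Shapley value: $\sum_y Sh[v_{\mathrm{cen}}](y)=v_{\mathrm{cen}}(V)=v(V\cup N(V))=v(V)$ and $\sum_y Sh[v](y)=v(V)$, so
\[
\sum_{y\in V} HC(y)\;=\;\bigl(1+\tfrac{1}{\Delta^*}\bigr)v(V)-\tfrac{1}{\Delta^*}\,v(V)\;=\;v(V),
\]
which equals $1$ under the natural normalization $v(V)=1$ (more generally, the identity holds with right-hand side $v(V)$, e.g.\ a single unit-profit task as in Example~\ref{ex1}).

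For $HC(x)\ge 0$, linearity of the Shapley operator yields $HC(x)=(1+\tfrac{1}{\Delta^*})\,Sh[g](x)$, where $g(S):=v_{\mathrm{cen}}(S)-\tfrac{1}{\Delta^*+1}\,v(S)$, so it suffices to show $Sh[g](x)\ge 0$. Decomposing $v=\sum_t w_t\chi_t$ and $v_{\mathrm{cen}}=\sum_t w_t \chi_{\mathrm{cen},t}$, with $\chi_t(S)=\mathbf{1}[T_t\subseteq Sk_S]$ and $\chi_{\mathrm{cen},t}(S)=\chi_t(S\cup N(S))$, and using linearity again, the problem reduces to proving, for each task $t$, the combinatorial inequality
\[
(\Delta^*+1)\cdot \#\{\pi:x\text{ pivotal for }\chi_{\mathrm{cen},t}\}\;\ge\;\#\{\pi:x\text{ pivotal for }\chi_t\}.
\]
If $x$ has no skill in $T_t$ the right-hand side vanishes and we are done; otherwise $x$ possesses the unique skill $s\in T_t$, and pivotality for $\chi_t$ along $\pi$ forces $|S^x_\pi\cap P(s)|=k_{s,t}-1$, i.e.\ $C:=P(s)\setminus S^x_\pi$ has cardinality exactly $n_s-k_{s,t}+1$.

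The remaining step is a charging argument. A ``bad'' ordering (pivotal for $\chi_t$ but not for $\chi_{\mathrm{cen},t}$) must satisfy $C\cap N(S^x_\pi)\neq\emptyset$, i.e.\ some $j\in C$ is already reached via a neighbor in $S^x_\pi$; by the very definition of $\Delta^*$, the number of such candidates $j$ is at most $\Delta^*$. The plan is to charge each bad $\pi$ to the ordering $\pi'$ obtained by transposing $x$ with a canonical such $j$ (say, the one with the earliest position in $\pi$), to verify that $x$ is pivotal for $\chi_{\mathrm{cen},t}$ along $\pi'$, and to show that this map is at most $\Delta^*$-to-one; the extra ``$+1$'' in the factor $\Delta^*+1$ is absorbed by the orderings simultaneously pivotal for $\chi_t$ and $\chi_{\mathrm{cen},t}$, which contribute to both sides of the inequality. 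The main obstacle I expect is the combinatorial bookkeeping — verifying that the transposition always produces a valid $\chi_{\mathrm{cen},t}$-pivotal witness and that no $\pi'$ is hit by more than $\Delta^*$ distinct bad $\pi$ — which will require a case analysis based on the relative positions of $x$, $j$, and the rest of $C$ in $\pi$, and a uniform treatment across all tasks $t$.
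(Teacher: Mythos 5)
Your setup agrees with the paper's: the efficiency computation for the sum (including the normalization $v(V)=1$, which the paper leaves implicit), the reduction by linearity to a single task, the auxiliary game $g=v_{\mathrm{cen}}-\frac{1}{\Delta^*+1}v$, the dispatch of the case where $x$ lacks the relevant skill, and the identification of $C=P(s)\setminus S^x_\pi$ of size $n_s-k_{s,t}+1$. The gap is in the one step that carries all the weight, which you leave as a plan --- and the plan as stated does not work. First, the charging map goes the wrong way: any $j\in C=P(s)\setminus S^x_\pi$ sits at or after $x$'s position in $\pi$, so transposing $x$ with $j$ yields $S^x_{\pi'}\supseteq S^x_\pi\cup\{j\}$, which now contains $k_{s,t}$ copies of $s$ and hence already wins for $\chi_t$, a fortiori for $\chi_{\mathrm{cen},t}$; thus $x$ is never pivotal for $\chi_{\mathrm{cen},t}$ along $\pi'$, and the map never lands in the set you are charging to. Second, $\Delta^*$ does not bound the number of your candidates $j$: it is defined as $\max |N(C)\setminus C|$, the number of nodes \emph{outside} $C$ adjacent to $C$, whereas your candidates live in $C\cap N(S^x_\pi)$, whose size is bounded only by $|C|=n_s-k_{s,t}+1$ and has no relation to $\Delta^*$.

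The paper replaces the charging by a direct conditional comparison. Writing $h(A)=v(A\cup N(A))-\frac{1}{\Delta^*+1}v(A)$ and enumerating the possible values of $h(A\cup\{x\})-h(A)$, the only negative case is $v(A)=0$, $v(A\cup\{x\})=1$, $v(A\cup N(A))=1$, worth $-\frac{1}{\Delta^*+1}$, to be weighed against the favorable case $v(A\cup\{x\})=1$, $v(A\cup N(A))=0$, worth $+\frac{\Delta^*}{\Delta^*+1}$. Conditioning on $C$, one is in the favorable case iff $x$ precedes every element of $N(C)\setminus C$ in $\pi$, and in the negative case otherwise, so the two occur in ratio $1:|N(C)\setminus C|$, and $|N(C)\setminus C|\le\Delta^*$ closes the argument. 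If you want to retain a combinatorial flavor, the correct object to control is $S^x_\pi\cap(N(C)\setminus C)$ (which must be nonempty for a bad ordering), not $C\cap N(S^x_\pi)$, and the counting should compare the event that $x$ is first in $\{x\}\cup(N(C)\setminus C)$ with its complement.
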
 
\noindent In the next two examples each skill is unique to some player: 
\begin{example}
Let $G=S_{n}$ be the star graph with $n$ vertices (Figure~\ref{p2} (b)) and let $S=\{2,3,\ldots, n\}$. Then 
\[
v_{cen,S}(T)=\left\{\begin{array}{ll} 
  1, & \mbox{ if } \{2,3,\ldots n\} \subseteq T \mbox{ or }1\in T, \\ 
  0, & \mbox{ if otherwise.} \\ 
  \end{array}
\right.
\]
Simple computations yield $Sh[v_{S}](1)=0$, $Sh[v_{S}](i)=1/(n-1)$ for $i=2,\ldots n$. Also 
$Sh[v_{cen,S}](1)=1-\frac{1}{n}$, $Sh[v_{cen,S}](i)=\frac{1}{n(n-1)}$ for $i=2,\ldots n$. In conclusion $
HC(1)=1, HC(i)=0,\mbox{ for }i=2,\ldots n.$
Node $1$ is the only one that has positive helping centrality, despite being a null player ! Also, sensibly, the helping centrality of all other nodes is zero, as their only neighbor is 1 which is not in the coalition $S$, so they cannot help. 

\end{example}

\begin{example} In the setting of Example~\ref{gtcentral}, 
the Shapley value of node $x$ has the formula \cite{michalak2013efficient} $
Sh[v](x)=\sum_{y\in \hat{N}(x)}\frac{1}{deg(y)+1}.$
Hence $HC(x)=(1+\frac{1}{\Delta})[ \sum\limits_{y\in \hat{N}(x)}\frac{1}{deg(y)+1} - \frac{1}{\Delta+1}]$, which is $\geq 0$, since $x\in \hat{N}(x)$ and $deg(x)\leq \Delta$. 
\end{example} 

However, it is not clear that, as defined, $HC[v](x)$ is nonnegative in all CSG. 
This motivates giving a second definition, which departs from the idea of giving a notion of helping centrality based on the Shapley value: 

\begin{definition} 
Given coalitional game $\Gamma$ and graph $G$, we define \textit{the Helping Shapley value of a player $x$} by $
HSh[v](x) = \frac{1}{n!} \cdot \sum_{\pi\in S_n} [v(S^{x}_{\pi} \cup \{x\}\cup N(x)) - v(S^{x}_{\pi})],$ 
$\mbox{ where }S^{x}_{\pi} = \{\pi[ i] | \pi[i] \text{ precedes x in  } \pi\}$. The \textit{ helping centrality} of player $x$ is defined as 
\begin{align} 
Help(x)=HSh[v](x)-Sh[v](x) = \nonumber \\  = \frac{1}{n!} \sum_{\pi\in S_n} [v(S^{x}_{\pi} \cup \{x\}\cup N(x)) - v(S^{x}_{\pi} \cup \{x\})]. 
\end{align} 
 Note that if game $\Gamma$ is monotonically increasing then $v(S^{x}_{\pi} \cup \{x\}\cup N(x)) - v(S^{x}_{\pi} \cup \{x\})\geq 0$. When the sign is strictly positive say that $x$ \textit{helps ordered coalition $S^{x}_{\pi}$.} Finally, by performing appropriate divisions we also consider the \textit{normalized} versions $\widetilde{HSh}$ and $\widetilde{Help}$.
\end{definition}  

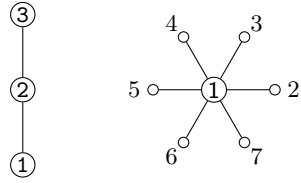
\begin{figure} 
\begin{center} 
\begin{minipage}{.10\textwidth}
\begin{center} 
\begin{tikzpicture}[scale=0.10,font=\sffamily,every path/.style={>=latex},every node/.style={draw,circle,inner sep=0.7pt,font=\small\sffamily}]
  \node(q1) at (0, 0){$\mathtt{1}$};
  \node(q2) at (0, 10)  {$\mathtt{2}$};
  \node(q3) at (0, 20) {$\mathtt{3}$};
 
  \draw (q1) to (q2);
  \draw (q2) to (q3);
 
\end{tikzpicture}
\end{center} 
\end{minipage} 
\begin{minipage}{.30\textwidth}
\begin{center}
\begin{tikzpicture}[scale=0.10,font=\sffamily,every path/.style={>=latex},every node/.style={draw,circle,inner sep=0.7pt,font=\small\sffamily}]
  
  \def \n {6}
  
  \foreach \x [count=\xi from 2] in {1,...,\n}{
    \pgfmathparse{\x * (360 / \n) - (360 / \n)};
    \node[circle,inner sep=1.4pt,label=\pgfmathresult:{$\xi$}] (N-\x) at (\pgfmathresult:8.1cm){};
  };
  
  \node (N-0) at (0:0cm){1}; 
  
  \foreach \x in {1,...,\n}{%
        \path (N-0) edge [-] (N-\x);
  }
\end{tikzpicture}
\end{center} 
\end{minipage}
\end{center} 
\caption{(a). Graph $P_{2}$ (b). Star graph $S_{7}$. }
\label{p2}
\end{figure}

\begin{example}
Let $G=P_{2}$ be the graph in Figure~\ref{p2}, and $\Gamma$ be the \textit{$T$-veto game} 
corresponding to coalition $T=\{1,3\}$. That is 
$v_{\{1,3\}}(S)=\left\{\begin{array}{ll} 
  1, & \mbox{ if } \{1,3\}\subset S, \\ 
  0, & \mbox{ if otherwise.} \\ 
  \end{array}
\right.$
Considering the permutations in $S_{3}$ in the order $(1,2,3)$, $(1,3,2)$, $(2,1,3)$, $(2,3,1)$, $(3,1,2)$, $(3,2,1)$, simple  computations show that 
$HSh[v_{\{1,3\}}](1)=\frac{1}{6}(0+0+0+1+1+1)=\frac{1}{2}$, $HSh[v_{\{1,3\}}](2)=\frac{1}{6}(1+0+1+1+0+1)=\frac{2}{3}$, $HSh[v_{\{1,3\}}](3)=\frac{1}{6}(1+1+1+0+0+0)=\frac{1}{2}$. As for helping centralities, we have $Sh[v_{\{1,3\}}](1)=Sh[v_{\{1,3\}}](3)=\frac{1}{2}$ and $Sh[v_{\{1,3\}}](2)=0$, so $Help(1)=Help(3)=0$, and $Help(2)=2/3$. Node $2$ is the only one that has positive helping centrality, despite being a null player !
\end{example}

A potential disadvantage of  measures $HSh$ and  $Help$ is that they do not have easy interpretations in terms of classical notions of coalitional game theory. On the other hand, they have exact formulas somewhat reminiscent of the corresponding formula~(\ref{sh-veto}) for the Shapley value:

\begin{theorem} Given game $\Gamma=(N,v)$ with veto decomposition $v=\sum_{S} a_{S}v_{S}$, graph $G=(N,E)$, $i\in N$ and $S\subseteq N$ denote by $NC(i,S)$ the set of nodes in $S\setminus \hat{N}(i)$ and by $Cov(i,S)=S\cap \hat{N}(i)$. Then $
HSh[v](i)=\sum\limits_{\stackrel{\emptyset \neq S}{i\in S}} \frac{a_{S}}{|NC(i,S)|+1}+\sum\limits_{\stackrel{\emptyset \neq S}{i\in N(S)\setminus S}} \frac{a_{S}}{|NC(i,S)|+1}[1-\frac{1}{|Cov(i,S)|}]$ and $Help(i)=\sum\limits_{\stackrel{\emptyset \neq S}{i\in S}} \frac{a_{S}(|Cov(i,S)|-1)}{|S|(|NC(i,S)|+1)}+  \sum\limits_{\stackrel{\emptyset \neq S}{i\in N(S)\setminus S}} \frac{a_{S}[|Cov(i,S)|-1]}{(|NC(i,S)|+1)|Cov(i,S)|}.$
\label{seven}
\end{theorem}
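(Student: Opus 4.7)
\begin{proof-sketch}
The plan is to exploit the linearity of both $HSh[\cdot](i)$ and $Help(\cdot)(i)$ in the characteristic function: since $v=\sum_{S}a_{S}v_{S}$, and both the Shapley value and the helping Shapley value depend linearly on $v$, it suffices to compute $HSh[v_{S}](i)$ (and then $Help[v_{S}](i)$) on a single veto game $v_{S}$ and sum the resulting expressions weighted by the Harsanyi dividends $a_{S}$.

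For a fixed nonempty $S\subseteq N$, I would rewrite the defining sum for $HSh[v_{S}](i)$ probabilistically: for a uniformly random permutation $\pi$,
\[
HSh[v_{S}](i) \;=\; \Pr_\pi\!\bigl[v_{S}(S^{i}_{\pi}\cup\hat{N}(i))=1\bigr] - \Pr_\pi\!\bigl[v_{S}(S^{i}_{\pi})=1\bigr].
\]
Because $v_{S}(T)=1$ iff $S\subseteq T$, the first event becomes ``all of $S\setminus\hat{N}(i)=NC(i,S)$ precedes $i$ in $\pi$'' and the second becomes ``all of $S$ precedes $i$ in $\pi$''. Each such event has the standard uniform-permutation probability $1/(k+1)$, where $k$ is the size of the corresponding set (using the usual fact that, conditioned on which positions the elements of a distinguished subset occupy, $i$ is equally likely to be in each). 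From here I would split into three cases according to how $i$ sits in $S\cup N(S)$:
\begin{itemize}
\item[(a)] $i\in S$. Then $S\not\subseteq S^{i}_{\pi}$ always, so the second probability is $0$; the first is $1/(|NC(i,S)|+1)$.
\item[(b)] $i\in N(S)\setminus S$. Then $Cov(i,S)\neq\emptyset$, $|S|=|NC(i,S)|+|Cov(i,S)|$, and the difference of probabilities reduces to a closed-form expression in $|NC(i,S)|$ and $|Cov(i,S)|$ only.
\item[(c)] $i\notin S\cup N(S)$. Then $Cov(i,S)=\emptyset$, $NC(i,S)=S$, and the two events coincide, contributing $0$.
\end{itemize}
Summing the contributions from cases (a) and (b), weighted by $a_{S}$, yields the claimed formula for $HSh[v](i)$.

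For $Help(i)=HSh[v](i)-Sh[v](i)$, I would invoke Equation~(\ref{sh-veto}), which gives $Sh[v](i)=\sum_{S\ni i}a_{S}/|S|$, and subtract it term-by-term from the expression obtained above. The terms indexed by $S$ with $i\in S$ collapse, after using $|S|=|NC(i,S)|+|Cov(i,S)|$, to the factor $(|Cov(i,S)|-1)/(|S|(|NC(i,S)|+1))$ stated in the theorem; the terms with $i\in N(S)\setminus S$ are left unchanged (since $Sh[v_{S}](i)=0$ whenever $i\notin S$) and match the second sum. The main obstacle I expect is the algebraic bookkeeping in case (b): one must verify that the difference of the two permutation-probabilities, after clearing denominators and using $|S|=|NC|+|Cov|$, produces precisely the factor $[1-1/|Cov(i,S)|]/(|NC(i,S)|+1)$ stated for $HSh$ and the corresponding factor for $Help$, rather than some other function of $|NC|,|Cov|$ that would agree numerically only in special configurations. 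Once this identity is pinned down, all remaining steps are routine summation and linearity.
\end{proof-sketch}
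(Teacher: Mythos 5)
Your plan follows the same route as the paper's own proof: reduce by linearity to a single veto game $v_{S}$, then compute permutation probabilities according to where $i$ sits relative to $S\cup N(S)$. Your cases (a) and (c), and your derivation of the $Help$ terms for $i\in S$, agree with the paper and are correct. The one place you deferred is exactly the step you flagged as the main obstacle, case (b), and your suspicion is warranted: carrying out your computation gives
\[
\Pr\bigl[NC(i,S)\subseteq S^{i}_{\pi}\bigr]-\Pr\bigl[S\subseteq S^{i}_{\pi}\bigr]
=\frac{1}{|NC(i,S)|+1}-\frac{1}{|NC(i,S)|+|Cov(i,S)|+1}
=\frac{|Cov(i,S)|}{(|NC(i,S)|+1)(|S|+1)},
\]
which is \emph{not} equal to the stated factor $\frac{1}{|NC(i,S)|+1}\bigl[1-\frac{1}{|Cov(i,S)|}\bigr]$ except in special configurations. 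The paper's proof instead treats the events ``all of $NC(i,S)$ precedes $i$'' and ``$i$ is not last among $Cov(i,S)\cup\{i\}$'' as independent and assigns the latter complement probability $1/|Cov(i,S)|$; both steps are unsound, since the two events share the element $i$. The paper's own worked example settles the matter: for $G=P_{2}$, $S=\{1,3\}$, $i=2$ one has $|NC|=0$, $|Cov|=2$, the direct enumeration in the paper gives $HSh[v_{\{1,3\}}](2)=2/3$, your expression gives $2/3$, and the theorem's expression gives $1/2$. So your approach is the right one and, pushed through, it proves a corrected version of the theorem in which the $i\in N(S)\setminus S$ terms read $\frac{a_{S}|Cov(i,S)|}{(|NC(i,S)|+1)(|S|+1)}$ for $HSh$ (and identically for $Help$, since $Sh[v_{S}](i)=0$ there); it does not, and should not, prove the formula as printed.
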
 

Unfortunately a formula similar to~(\ref{sh-veto}) for $Sh[v_{cen}]$ (and ultimately for $HC[v]$) seems hard to obtain. The reason is that in order to compute $Sh[v_{cen}]$ we would need to compute the Shapley values of games of type 
\[
v_{cen,S}(T)=\left\{\begin{array}{ll} 
  1, & \mbox{ if } S\subset T\cup N(T), \\ 
  0, & \mbox{ if otherwise.} \\ 
  \end{array}
\right.
\]
Doing this requires computing the probability that a random set of vertices is a vertex cover, 
which seems infeasible. This shows another problem of the Shapley-based helping centrality. 

\subsection{Helping Centralities in Terrorist Networks\thanks{Some details on the (simple) computations substantiating our claims in this section are given as Supplemental Material.}}
 
We next apply our helping centrality measures to the terrorist network in Figure~\ref{fig911}. We could estimate helping centralities using sampling techniques similar to those for the Shapley value, but in this case of the 9/11 network exact computations are actually feasible. To reduce overhead, call two nodes equivalent iff (a). They have the same set of skills (b). The families of multisets of skills of their neighbors are identical as multisets. For Figure~\ref{fig911} this relation splits $V$ into the following equivalence classes: 
$\{1,19\},\{2\},\{3,15\},\{18\},\{17\},\{4,11\},\{5\}$, $\{16\},\{14\}$, $\{13\},\{9,12\},\{6,8,10\},\{7\}$. It is easy to see that equivalent nodes have identical Helping and Shapley-Based Helping Centralities. Nodes 2 helps no ordered coalitions, hence $Help(2)=0$. At the other extreme, 16 has the highest normalized helping centrality, $0.126117$. The complete node ordering by decreasing Helping Centrality is $16> 13> \{7,9,12\}> \{6,8,10,11,4\}>5>  \{3,14,15\}> \{1,19\} >\{17,18\}>2.$


As for the Shapley-Based Helping Centrality, the 9/11 Network falls within the scope of Theorem~\ref{positive-helping}. $\Delta^{*}=11$, as witnessed by the coalition $C$ of $M$ nodes except $11$. Again 16 has the highest value, $HC(16)= 0.118687$, while the order by decreasing $HC$ is $16> 13> \{7,9,12\}> \{6,8,10,11\}>4> 5>  \{3,14,15,1,19\}>\{17,18,2\}.$ Both measures  identify node 16 as the most helpful, with comparable centralities, and give quite similar orderings (an interesting fact, since the two measures were fairly different). The ordering produced by $Help$ seems slightly more discriminating. 

\section{Complexity of Helping Centralities.} 
As expected, computing helping centralities for arbitrary CSG is computationally intractable: 

\begin{theorem} 
The following problems are $\#P$-complete: 
\begin{itemize} 
\item[-] [INPUT:] Graph $G$, CSG $\Gamma$,  and player $x\in N$.
\item[-] [COMPUTE:] (a). The Helping Shapley value $HSh[v](x)$. (b). The Shapley-based helping centrality $HC(x)$ of node $x$. 
\end{itemize} 
\label{sharp-p}
\end{theorem}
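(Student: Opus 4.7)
The plan is to prove the theorem in two halves: membership (in $\#P$ after uniform scaling, in $FP^{\#P}$ for the signed difference $HC$) and $\#P$-hardness via a reduction from the known $\#P$-hard problem of computing the Shapley value of an arbitrary CSG~\cite{aziz2009algorithmic}. Membership will follow because the characteristic function $v$ of a CSG is monotone and poly-time evaluable, so each marginal $v(S^{x}_{\pi} \cup \{x\}\cup N(x)) - v(S^{x}_{\pi})$ is a nonnegative integer (after scaling weights uniformly) computable from the input, and $HSh[v](x)\cdot n!$ is a standard $\#P$ permutation sum; the quantity $HC(x)$ is a polynomial-time combination of two such $\#P$ values.

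For hardness of part (a), I would take the graph to be $G = (N,\emptyset)$ and keep $\Gamma$ unchanged. Since $N(x) = \emptyset$, the defining sum of $HSh[v](x)$ coincides termwise with that of $Sh[v](x)$, giving a parsimonious reduction from the Shapley value problem for CSG.

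For hardness of part (b), I would take $G$ to be the complete graph $K_n$ on the players of $\Gamma$. In $K_n$, for every nonempty $S$ we have $N(S) \supseteq N \setminus S$, so $v_{cen}(S) = v(N)$ while $v_{cen}(\emptyset) = 0$. A one-line permutation count then yields $Sh[v_{cen}](x) = v(N)/n$. Since $v(N)$ and $\Delta^*$ are computed from the input in polynomial time, the defining identity
\[ HC(x) = \bigl(1 + \tfrac{1}{\Delta^*}\bigr)\frac{v(N)}{n} - \tfrac{1}{\Delta^*}\,Sh[v](x) \]
can be solved for $Sh[v](x)$, so an oracle for $HC$ recovers the hard Shapley value via one call plus polynomial-time arithmetic. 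The main technical hurdle is ensuring $\Delta^* \geq 1$ so the formula is well-defined; this is immediate if the hard instance has any skill not held by every player, and can be arranged uniformly by padding the CSG with a single weight-zero dummy task requiring such a skill, which leaves $Sh[v](x)$ unchanged but forces $\Delta^* \geq 1$ in $K_n$.
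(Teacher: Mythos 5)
Your argument for part (a) is exactly the paper's: take $G=(V,\emptyset)$, observe that $N(x)=\emptyset$ makes the defining sum of $HSh[v](x)$ coincide termwise with that of $Sh[v](x)$, and invoke the $\#P$-completeness of the Shapley value for CSGs. For part (b), however, you take a genuinely different route, and a better one. The paper reuses the empty graph and asserts $HC(x)=Sh[v](x)\cdot(1-\frac{1}{\Delta+1})$; but on the empty graph $\Delta=\Delta^{*}=0$, so that expression is either identically zero or undefined (the definition of $HC$ carries the factor $1+\frac{1}{\Delta^{*}}$), and in neither reading can one recover the hard quantity $Sh[v](x)$ from $HC(x)$. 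Your choice of $K_n$ sidesteps exactly this degeneracy: there $v_{cen}$ collapses to $v_{cen}(S)=v(N)$ for all nonempty $S$, so $Sh[v_{cen}](x)=v(N)/n$ is computable outright, and the defining identity for $HC$ inverts to give $Sh[v](x)$ provided $\Delta^{*}\geq 1$ --- which your weight-zero padding task guarantees without perturbing $Sh[v]$. You are also more careful than the paper on the membership half (scaling weights to integers so the permutation sum is a bona fide $\#P$ function, and placing the signed quantity $HC$ in $FP^{\#P}$), which the paper's one-line proof does not address at all. In short: part (a) matches the paper; part (b) is a different reduction that repairs a real gap in the paper's argument.
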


\subsection{A Tractable Special Case: Pure  Skill Games} 
In the sequel we highlight a special class of CSG for which computing helping centralities is tractable. 

\begin{definition} 
A \textit{pure skill game} is a CSG where, for every $t\in T$, $|T_{t}|=1$ (every task presumes a single skill). 
\end{definition}



\begin{theorem} For pure skill games $v$ and player $x\in V$
\begin{align}
& HC(x) = \sum\limits_{t\in T_{\hat{N}(x)}} w_{t}\frac{(\Delta^{*}+1)|P(t)|-|P(t)\cup N(P(t))|}{\Delta^{*}|P(t)||P(t)\cup N(P(t))|} \nonumber\\
& Help(x)= \sum\limits_{t\in T_{x}} \frac{w_{t}}{|P(t)|(|P(t)|+1)}\nonumber
\end{align} 
We have denoted by $P(t)$ the set of players that have the unique skill needed to complete task $t$. Consequently, both quantities $Help(x)$ and $HC(x)$ are $\geq 0$ for all players $x$. 
\label{one}
\end{theorem}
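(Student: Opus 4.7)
The plan is to exploit the linearity of the Shapley-like functionals together with a decomposition of any pure skill game into a nonnegative combination of ``hitting'' games. Since each task $t$ requires a single skill, and that skill is possessed by the players in $P(t)$, the game decomposes as $v = \sum_{t \in T} w_t\, u_t$, where $u_t(S) = \mathbf{1}[S \cap P(t) \neq \emptyset]$. The operators $v \mapsto Sh[v]$, $v \mapsto Sh[v_{cen}]$, and $v \mapsto HSh[v]$ are all linear in the characteristic function, so it suffices to compute each for a single hitting game $u_A$ (with $A = P(t)$) and then reassemble.

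Accordingly, I would first establish three one-line identities for $u_A$. First, by symmetry among the players of $A$ and efficiency, $Sh[u_A](x) = 1/|A|$ if $x \in A$ and $0$ otherwise. Second, the identity $(S \cup N(S)) \cap A \neq \emptyset \iff S \cap (A \cup N(A)) \neq \emptyset$ shows that the centrality extension $(u_A)_{cen}$ is itself a hitting game, on the enlarged target $A \cup N(A)$, so $Sh[(u_A)_{cen}](x) = 1/|A \cup N(A)|$ if $x \in A \cup N(A)$ and $0$ otherwise. Third, for $HSh[u_A](x)$ I would analyze the permutation-average formula: the marginal $u_A(S^x_\pi \cup \{x\} \cup N(x)) - u_A(S^x_\pi)$ equals $1$ exactly when $\hat{N}(x) \cap A \neq \emptyset$ and $S^x_\pi \cap A = \emptyset$, i.e., when $x$ precedes every element of $A$ in $\pi$. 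A case split on whether $x \in A$ yields $HSh[u_A](x) = 1/|A|$ when $x \in A$, $1/(|A|+1)$ when $x \notin A$ but $\hat{N}(x) \cap A \neq \emptyset$, and $0$ otherwise; the extra ``$+1$'' in the denominator comes from the fact that $x$ is now competing with $|A|+1$ candidates to be first.

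Summing these closed forms over tasks, with the identifications ``$x \in P(t) \iff t \in T_x$'' and ``$\hat{N}(x) \cap P(t) \neq \emptyset \iff t \in T_{\hat{N}(x)}$'', yields closed forms for $Sh[v](x)$, $Sh[v_{cen}](x)$, and $HSh[v](x)$; in particular, $HSh[v](x)$ splits naturally into a sum over $T_x$ with denominator $|P(t)|$ and a sum over $T_{\hat{N}(x)} \setminus T_x$ with denominator $|P(t)|+1$. Plugging these into $HC(x) = (1+1/\Delta^*)\,Sh[v_{cen}](x) - (1/\Delta^*)\,Sh[v](x)$ and $Help(x) = HSh[v](x) - Sh[v](x)$ and combining the fractions over the common denominator $\Delta^*\,|P(t)|\,|P(t) \cup N(P(t))|$ yields the stated expressions.

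The final step is nonnegativity. $Help(x)$ is visibly a sum of nonnegative terms. For $HC(x)$, the per-task summand can only become negative on tasks with $x \in P(t)$, where the numerator is $(\Delta^*+1)|P(t)| - |P(t) \cup N(P(t))|$; writing $|P(t) \cup N(P(t))| = |P(t)| + |N(P(t)) \setminus P(t)|$ reduces nonnegativity to $|N(P(t)) \setminus P(t)| \leq \Delta^*\,|P(t)|$. Since $k_{s,t}=1$ in a pure skill game, the defining maximum for $\Delta^*$ is taken over sets $C \subseteq P(t)$ of cardinality $n_s - k_{s,t} + 1 = |P(t)|$, so the choice $C = P(t)$ is admissible and already gives $|N(P(t)) \setminus P(t)| \leq \Delta^*$, which is stronger than needed (as $|P(t)| \geq 1$). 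The main obstacle is the careful bookkeeping in the $HSh$ computation and the re-indexing of partial sums over $T_x$ versus $T_{\hat{N}(x)} \setminus T_x$; everything else is linearity and a single probabilistic micro-step, namely that among a uniformly random ordering of $A \cup \{x\}$ with $x \notin A$, $x$ is first with probability $1/(|A|+1)$.
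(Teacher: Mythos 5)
Your overall route is the same as the paper's: decompose the pure skill game by linearity into single-task games (which, because each task needs a single skill, are exactly the hitting games $u_{P(t)}$), and evaluate $Sh$, $Sh[\cdot_{cen}]$ and $HSh$ on a hitting game by a ``who comes first'' probability argument. You are in fact more explicit than the paper in two places it leaves implicit: the identity showing $(u_A)_{cen}$ is itself the hitting game on $A\cup N(A)$, and the nonnegativity of the $HC$ summand, which you correctly reduce to $|N(P(t))\setminus P(t)|\le \Delta^{*}|P(t)|$ and justify from the definition of $\Delta^{*}$ with the admissible choice $C=P(t)$ (the paper merely asserts a bound of the form $|P(t)\cup N(P(t))|\le(\Delta+1)|P(t)|$). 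Your three closed forms for $Sh[u_A]$, $Sh[(u_A)_{cen}]$ and $HSh[u_A]$ all check out and agree with the paper's case analysis.

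The difficulty is the final assembly step, which you assert rather than carry out. With your (correct) closed forms, $HSh[v](x)-Sh[v](x)$ receives \emph{zero} contribution from every task $t\in T_{x}$ (both terms contribute $w_t/|P(t)|$ there and cancel) and contribution $w_t/(|P(t)|+1)$ from each $t\in T_{\hat{N}(x)}\setminus T_{x}$; so the subtraction yields $Help(x)=\sum_{t\in T_{\hat{N}(x)}\setminus T_{x}} w_t/(|P(t)|+1)$, not the stated $\sum_{t\in T_{x}} w_t/\bigl(|P(t)|(|P(t)|+1)\bigr)$. These genuinely differ: for the dummy game $v(S)=|S|$, where every player owns a distinct unit-profit single-skill task, the stated formula gives $Help(x)=1/2$ for every node, whereas the definition gives $Help(x)=\mathbb{E}_{\pi}|N(x)\setminus S^{x}_{\pi}|=\deg(x)/2$. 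So the sentence ``combining the fractions \dots\ yields the stated expressions'' is exactly where your argument does not close. The discrepancy appears to originate in the theorem statement itself---the paper's own proof likewise stops after the two-case analysis of $HSh$ and never assembles $Help$---but as a derivation of the displayed formula your proof has a gap at that step. The $HC$ formula and both nonnegativity claims are fine.
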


\section{Axioms for Helping Centralities.} 

The Shapley value has a nice axiomatic characterization \cite{shapley1953value}. The axiomatic approach to characterizing various coalitional measures has developed into an important direction in coalitional game theory, and has recently been adapted to centrality measures as well. A natural question is whether our Helping Shapley value has a similar axiomatic characterization. 

To attempt such a characterization we define a number of properties reminiscent of the axiomatic characterization of the ordinary Shapley value:  
 
\begin{definition} Given graph $G=(V,E)$, a
function $f:\Gamma[V] \rightarrow \mathbb{R}^{V}$ satisfies the axioms of
\begin{itemize}
\item[-] \textbf{linearity} if for every player $x\in V$ and any two games $v_{1},v_{2}$ on $V$, $f[v_{1}+v_{2}](x)= f[v_{1}](x)+f[v_{2}](x)$ and, for every $\alpha \in \mathbb{R}$, 
$f[\alpha v_{1}](x) = \alpha\cdot f[v_{1}](x)$. 
\item[-] \textbf{null helping} if for every game $v$ on $V$ and $i\in V$ s.t. for every $S\subseteq V$, $v(S\cup \{i\}\cup N(i))=v(S)$ then $f[v](i)=0$. 
\item[-] \textbf{veto game symmetry} if for any veto game $v_{S}$ and players $x,y$ , $f[v_{S}](x),f[v_{S}](y)>0 \Rightarrow f[v_{S}](x)=f[v_{S}](y).$
\end{itemize} 
\end{definition} 

\begin{theorem} 
The Helping Shapley value satisfies the linearity and null helping axioms.
\label{axioms} 
\end{theorem}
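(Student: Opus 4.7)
The plan is to verify both axioms directly from the definition
\[
HSh[v](x) = \frac{1}{n!} \sum_{\pi\in S_n} \bigl[v(S^{x}_{\pi} \cup \{x\}\cup N(x)) - v(S^{x}_{\pi})\bigr],
\]
since each axiom reduces to an immediate manipulation of this sum. I do not expect a real obstacle here; the result is essentially an observation exploiting the fact that $HSh$ is an average of differences of values of $v$, exactly as the classical Shapley value is, so the standard arguments for Shapley linearity and a null-player axiom transfer almost verbatim.

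For \textbf{linearity}, I would fix $x\in V$ and two games $v_1, v_2$ on $V$, and argue that for each permutation $\pi\in S_n$ the quantity
\[
(v_1+v_2)(S^{x}_{\pi}\cup\{x\}\cup N(x)) - (v_1+v_2)(S^{x}_{\pi})
\]
splits, by definition of the sum of games, into the sum of the two corresponding marginals for $v_1$ and $v_2$. Averaging over $\pi$ then yields $HSh[v_1+v_2](x) = HSh[v_1](x)+HSh[v_2](x)$. The same termwise argument handles scalar multiplication: $(\alpha v)(T)-(\alpha v)(T')=\alpha[v(T)-v(T')]$ for any sets $T,T'$, and pulling $\alpha$ out of the average gives $HSh[\alpha v](x)=\alpha\, HSh[v](x)$.

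For \textbf{null helping}, suppose $i\in V$ satisfies $v(S\cup\{i\}\cup N(i)) = v(S)$ for every $S\subseteq V$. For any permutation $\pi\in S_n$ the set $S^{i}_{\pi}$ is a subset of $V$ (not containing $i$), so applying the hypothesis with $S=S^{i}_{\pi}$ gives
\[
v(S^{i}_{\pi}\cup\{i\}\cup N(i)) - v(S^{i}_{\pi}) = 0.
\]
Every term of the sum defining $HSh[v](i)$ vanishes, hence $HSh[v](i)=0$, as required.

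Both arguments are bookkeeping on the definition, so rather than a technical obstacle, the main thing to be careful about is simply to note explicitly that the hypothesis of null helping is phrased with the quantifier ``for every $S\subseteq V$'' and therefore applies in particular to every prefix set $S^{i}_{\pi}$ that appears in the averaging formula. No deeper combinatorial argument is needed.
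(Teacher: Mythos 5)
Your proof is correct and matches the paper's approach: the paper's own proof is just the remark that these are ``a couple of straightforward verifications,'' and the termwise manipulations you spell out (splitting each marginal for linearity, and applying the null-helping hypothesis to each prefix set $S^{i}_{\pi}$) are exactly those verifications made explicit.
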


Unfortunately while the Shapley value satisfies veto game symmetry, this is \textbf{not} true for the Helping Shapley value: 

\begin{example} 
Consider the star network $S_{n}$ in figure~\ref{p2} (b). Then in the unanimity game $v_{N}$ on $S_{n}$ (corresponding to $S=\{1,2,\ldots, n\}$) we have $
HSh[v_{S}](1)=1 > 0,$ $
HSh[v_{S}](i)= \frac{1}{n-1} >0  \mbox{ for all } i=2,\ldots, n. $
Indeed, node $i\geq 2$ is pivotal for $\pi$ iff all other nodes in $S\setminus \{1,i\}$ appear before $i$ in $\pi$. This happens with probability $1/(n-1)$. 
\end{example}  

This mismatch has implications for the axiomatic characterization of the Helping Shapley value: for the ordinary Shapley value its uniqueness amounts to establishing veto game symmetry, which normally follows from an \textit{equal treatment} axiom. The lack of veto symmetry means that we cannot adapt the classical proof of the Uniqueness of the Shapley value to the Helping Shapley value, but we only have the following weaker version: 

\begin{theorem} 
The Helping Shapley value $HSh$ is the only function $f$ that satisfies linearity  and $
f[v_{S}]=HSh[v_{S}]$ for all veto games $v_{S}$. 
\label{axiomatic}
\end{theorem}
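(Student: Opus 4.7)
The plan is to follow the classical template for uniqueness results of Shapley-type values, exploiting the fact that the set of veto games $\{v_{S} : \emptyset \neq S \subseteq V\}$ forms a basis for the vector space $\Gamma(V)$ of coalitional games on $V$, a fact already recalled in the Preliminaries via Harsanyi's dividends. Thus every game $v$ admits a unique decomposition $v = \sum_{S} a_{S} v_{S}$ with $a_{S} \in \mathbb{R}$, and the hypotheses of the theorem pin down $f$ exactly on this spanning set.

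The argument then has three short steps. First, given any game $v$ with its Harsanyi decomposition, I would iteratively apply the additivity part of the linearity axiom of $f$, together with its scalar homogeneity applied to each $a_{S}$ (which may be negative, but this is allowed since the axiom quantifies $\alpha$ over all of $\mathbb{R}$), to obtain $f[v](x) = \sum_{S} a_{S} f[v_{S}](x)$ for each player $x \in V$. Second, the hypothesis that $f$ agrees with $HSh$ on every veto game gives $f[v_{S}](x) = HSh[v_{S}](x)$ for every nonempty $S$. Third, I invoke Theorem~\ref{axioms}, which establishes that $HSh$ itself is linear, to recombine $\sum_{S} a_{S} HSh[v_{S}](x) = HSh\bigl[\sum_{S} a_{S} v_{S}\bigr](x) = HSh[v](x)$. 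Chaining these three equalities yields $f[v](x) = HSh[v](x)$ for every $v$ and every $x$, which is the desired uniqueness.

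There is essentially no serious obstacle here; the proof is a direct transposition of the standard basis-extension argument used for the classical Shapley value, and the reason the theorem is only a \emph{weaker} uniqueness (compared with Shapley's original) is already flagged in the paper, namely that we cannot replace the assumption \textquotedblleft$f[v_{S}] = HSh[v_{S}]$ on all veto games\textquotedblright{} by a symmetry-style axiom, since the Helping Shapley value violates veto game symmetry. The only points worth double-checking in the write-up are that linearity as defined really does cover arbitrary finite real linear combinations (it does, by a trivial induction), and that Theorem~\ref{axioms} indeed supplies the linearity of $HSh$ invoked in the final recombination.
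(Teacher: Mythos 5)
Your proof is correct and follows essentially the same route as the paper's: decompose an arbitrary game in the veto-game basis, apply the linearity of $f$ to reduce to the veto games, and substitute the hypothesis $f[v_{S}]=HSh[v_{S}]$. The only cosmetic difference is in the final recombination, where you invoke the linearity of $HSh$ from Theorem~\ref{axioms} while the paper plugs in the explicit formula for $HSh[v_{S}]$; both close the argument immediately.
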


\section{Related work\protect\footnote{F\lowercase{or reasons of space this section gives only minimal pointers to the existing literature;  A comprehensive treatment is deferred to the full (journal) version.}}}

Our work combines several important lines of research: the extensive literature on (game-theoretic) centrality measures (see \cite{bloch2017centrality,tarkowski2017game} for reviews from different perspectives) and that on compact representation frameworks for cooperative games \cite{deng1994complexity,conitzer2003complexity,ieong2005marginal,elkind2007computational,bachrach2008coalitional}. Through our Theorem~\ref{axiomatic} we connect to the growing literature 
on the axiomatic characterization of  values and centrality measures (for a recent example see \cite{skibski2017axiomatic}).  

Ideas related to the use of compact representations in defining notions of network centrality have been considered (implicitly or explicitly) in previous literature, e.g. \cite{tarkowski2018efficient,skibski2017axiomatic}. This last paper is, perhaps, the closest in spirit to our approach. They undertake a comprehensive study of classes of network centralities and identify axiomatic foundations for various representational frameworks. Compared to this work our focus is, however, different: we strive to include capabilities/tasks explicitly into the representational framework, and identify one framework which does just that. 

Finally, a related problem (but with different technical concerns) is \textit{team formation} in the knowledge discovery literature \cite{lappas2009finding,li2010team}.

\section{Conclusions, Further Work, Open Problems.}

Our work provides two important conceptual contributions: 
\begin{itemize} 
\item[
\noindent (1).] Giving \textit{an explicit framework for representing capabilities to perform tasks} in measures of network centrality, and
\item[
\noindent (2).] Proposing \textit{the new notion(s) of helping centrality}. We have given two such measures, which perform similarly on the 9/11 network.  Helping Centrality seems to be slightly more discriminating (and has sometimes exact formulas) but seems to lack "nice" axiomatizations. 
\end{itemize} 
Several open issues arise: first of all, we have only used one of the several formalisms for CSG games. A more extensive investigation of the representational power of (other) families would be in order. So would the computational and experimental aspects of helping centralities. The problem of defining representational formalisms that are able to naturally model all "reasonable" centrality measures is, we feel, an interesting one. Finally, note that Helping Centrality is missing from the list of $\#P$-complete results of Theorem~\ref{sharp-p}. We leave its complexity open. 

\bibliographystyle{unsrt}
\bibliography{/Users/gistrate/Dropbox/texmf/bibtex/bib/bibtheory}

\begin{thebibliography}{10}

\bibitem{bloch2017centrality}
Francis Bloch, Matthew~O Jackson, and Pietro Tebaldi.
\newblock Centrality measures in networks.
\newblock {\em Available at SSRN 2749124}, 2017.

\bibitem{suri2008determining}
N~Suri and Y.~Narahari.
\newblock Determining the top-k nodes in social networks using the {S}hapley
  value.
\newblock In {\em Proceedings of AAMAS'08}, pages 1509--1512, 2008.

\bibitem{narayanam2014shapley}
Ramasuri Narayanam, Oskar Skibski, Hemank Lamba, and Tomasz Michalak.
\newblock A {S}hapley value-based approach to determine gatekeepers in social
  networks with applications.
\newblock In {\em Proceedings of the Twenty-First European Conference on
  Artificial Intelligence (ECAI'14)}, pages 651--656. IOS Press, 2014.

\bibitem{maghami2012identifying}
Mahsa Maghami and Gita Sukthankar.
\newblock Identifying influential agents for advertising in multi-agent
  markets.
\newblock In {\em Proceedings of the 11th International Conference on
  Autonomous Agents and Multiagent Systems-Volume 2}, pages 687--694.
  International Foundation for Autonomous Agents and Multiagent Systems, 2012.

\bibitem{lindelauf2013cooperative}
R.~Lindelauf, H.~Hamers, and B.~Husslage.
\newblock Cooperative game theoretic centrality analysis of terrorist networks:
  The cases of jemaah islamiyah and al qaeda.
\newblock {\em European Journal of Operational Research}, 229(1):230--238,
  2013.

\bibitem{michalak2015defeating}
Tomasz~P. Michalak, Talal Rahwan, Oskar Skibski, and Michael Wooldridge.
\newblock Defeating terrorist networks with game theory.
\newblock {\em IEEE Intelligent Systems}, 30(1):53--61, 2015.

\bibitem{szczepanski2014centrality}
Piotr~L Szczepanski, Tomasz~P Michalak, and Michael~J Wooldridge.
\newblock A centrality measure for networks with community structure based on a
  generalization of the {O}wen value.
\newblock In {\em ECAI}, volume~14, pages 867--872. Citeseer, 2014.

\bibitem{tarkowski2017game}
Mateusz~K. Tarkowski, Tomasz~P. Michalak, Talal Rahwan, and Michael Wooldridge.
\newblock Game-theoretic network centrality: A review.
\newblock {\em arXiv preprint arXiv:1801.00218}, 2017.

\bibitem{tarkowski2016closeness}
Mateusz~K Tarkowski, Piotr Szczepa{\'n}ski, Talal Rahwan, Tomasz~P Michalak,
  and Michael Wooldridge.
\newblock Closeness centrality for networks with overlapping community
  structure.
\newblock In {\em Thirtieth AAAI Conference on Artificial Intelligence}, 2016.

\bibitem{aadithya2010game}
Karthik~V. Aadithya and Balaraman Ravindran.
\newblock Game theoretic network centrality: exact formulas and efficient
  algorithms.
\newblock In {\em Proceedings of the 2010 International Conference on
  Autonomous Agents and Multiagent Systems: volume 1-Volume 1}, pages
  1459--1460, 2010.

\bibitem{szczepanski2016efficient}
Piotr~L Szczepa{\'n}ski, Tomasz~P Michalak, and Talal Rahwan.
\newblock Efficient algorithms for game-theoretic betweenness centrality.
\newblock {\em Artificial Intelligence}, 231:39--63, 2016.

\bibitem{tarkowski2018efficient}
Mateusz~K Tarkowski, Piotr~L Szczepa{\'n}ski, Tomasz~P Michalak, Paul
  Harrenstein, and Michael Wooldridge.
\newblock Efficient computation of semivalues for game-theoretic network
  centrality.
\newblock {\em Journal of Artificial Intelligence Research}, 63:145--189, 2018.

\bibitem{granovetter1977strength}
Mark~S Granovetter.
\newblock The strength of weak ties.
\newblock In {\em Social networks}, pages 347--367. Elsevier, 1977.

\bibitem{skibski2017axiomatic}
Oskar Skibski, Tomasz~P Michalak, and Talal Rahwan.
\newblock Axiomatic characterization of game-theoretic network centralities.
\newblock In {\em Thirty-First AAAI Conference on Artificial Intelligence},
  2017.

\bibitem{deng1994complexity}
X.~Deng and C.H. Papadimitriou.
\newblock On the complexity of cooperative solution concepts.
\newblock {\em Mathematics of Operations Research}, pages 257--266, 1994.

\bibitem{conitzer2003complexity}
Vincent Conitzer and Tuomas Sandholm.
\newblock Complexity of determining nonemptiness of the core.
\newblock In {\em Proceedings of the 4th ACM Conference on Electronic
  Commerce}, pages 230--231. ACM, 2003.

\bibitem{ieong2005marginal}
Samuel Ieong and Yoav Shoham.
\newblock Marginal contribution nets: a compact representation scheme for
  coalitional games.
\newblock In {\em Proceedings of the 6th ACM conference on Electronic
  commerce}, pages 193--202. ACM, 2005.

\bibitem{elkind2007computational}
Edith Elkind, Leslie~Ann Goldberg, Paul Goldberg, and Michael Wooldridge.
\newblock Computational complexity of weighted threshold games.
\newblock In {\em Proc. AAAI'07}, volume~22, page 718, 2007.

\bibitem{bachrach2008coalitional}
Yoram Bachrach and Jeffrey~S Rosenschein.
\newblock Coalitional skill games.
\newblock In {\em Proceedings of the 7th international joint conference on
  Autonomous agents and multiagent systems (AAMAS'08)}, pages 1023--1030, 2008.

\bibitem{chalkiadakis2011computational}
G.~Chalkiadakis, E.~Elkind, and M.~Wooldridge.
\newblock Computational aspects of cooperative game theory.
\newblock {\em Synthesis Lectures on Artificial Intelligence and Machine
  Learning}, 2011.

\bibitem{karlin2017game}
Anna~R Karlin and Yuval Peres.
\newblock {\em Game theory, alive}, volume 101.
\newblock American Mathematical Soc., 2017.

\bibitem{bachrach2013computing}
Y.~Bachrach, D.~Parkes, and J.~Rosenschein.
\newblock Computing cooperative solution concepts in coalitional skill games.
\newblock {\em Artificial Intelligence}, 204:1--21, 2013.

\bibitem{krebs2002mapping}
Valdis~E Krebs.
\newblock Mapping networks of terrorist cells.
\newblock {\em Connections}, 24(3):43--52, 2002.

\bibitem{amer2004connectivity}
Rafael Amer and Jos{\'e}~Miguel Gim{\'e}nez.
\newblock A connectivity game for graphs.
\newblock {\em Mathematical {M}ethods of {O}perations {R}esearch},
  60(3):453--470, 2004.

\bibitem{myerson1977graphs}
Roger~B Myerson.
\newblock Graphs and cooperation in games.
\newblock {\em Mathematics of {O}perations {R}esearch}, 2(3):225--229, 1977.

\bibitem{skibski2019enumerating}
Oskar Skibski, Talal Rahwan, Tomasz~P Michalak, and Michael Wooldridge.
\newblock Enumerating connected subgraphs and computing the myerson and shapley
  values in graph-restricted games.
\newblock {\em ACM Transactions on Intelligent Systems and Technology (TIST)},
  10(2):15, 2019.

\bibitem{dubey1981value}
Pradeep Dubey, Abraham Neyman, and Robert~James Weber.
\newblock Value theory without efficiency.
\newblock {\em Mathematics of Operations Research}, 6(1):122--128, 1981.

\bibitem{michalak2013efficient}
Tomasz~P Michalak, Karthik~V Aadithya, Piotr~L Szczepanski, Balaraman
  Ravindran, and Nicholas~R Jennings.
\newblock Efficient computation of the shapley value for game-theoretic network
  centrality.
\newblock {\em Journal of Artificial Intelligence Research}, pages 607--650,
  2013.

\bibitem{aziz2009algorithmic}
Haris Aziz.
\newblock {\em Algorithmic and complexity aspects of simple coalitional games}.
\newblock PhD thesis, University of Warwick, 2009.

\bibitem{alshebli2019measure}
Bedoor~K Alshebli, Tomasz~P Michalak, Oskar Skibski, Michael Wooldridge, and
  Talal Rahwan.
\newblock A measure of added value in groups.
\newblock {\em ACM Transactions on Autonomous and Adaptive Systems (TAAS)},
  13(4):18, 2019.

\bibitem{shapley1953value}
Lloyd~S Shapley.
\newblock A value for n-person games.
\newblock {\em Contributions to the Theory of Games}, 2(28):307--317, 1953.

\bibitem{lappas2009finding}
Theodoros Lappas, Kun Liu, and Evimaria Terzi.
\newblock Finding a team of experts in social networks.
\newblock In {\em Proceedings of the 15th ACM SIGKDD international conference
  on Knowledge discovery and data mining}, pages 467--476. ACM, 2009.

\bibitem{li2010team}
Cheng-Te Li and Man-Kwan Shan.
\newblock Team formation for generalized tasks in expertise social networks.
\newblock In {\em 2010 IEEE Second International Conference on Social
  Computing}, pages 9--16. IEEE, 2010.

\end{thebibliography}


\begin{thebibliography}{}

\bibitem[\protect\citeauthoryear{Aadithya and
  Ravindran}{2010}]{aadithya2010game}
Karthik~V. Aadithya and Balaraman Ravindran.
\newblock Game theoretic network centrality: exact formulas and efficient
  algorithms.
\newblock In {\it  Proceedings of the 2010 International Conference on
  Autonomous Agents and Multiagent Systems: volume 1-Volume 1}, pages
  1459--1460, 2010.

\bibitem[\protect\citeauthoryear{Aziz}{2009}]{aziz2009algorithmic}
Haris Aziz.
\newblock {\it  Algorithmic and complexity aspects of simple coalitional games}.
\newblock PhD thesis, University of Warwick, 2009.

\bibitem[\protect\citeauthoryear{Bachrach and
  Rosenschein}{2008}]{bachrach2008coalitional}
Yoram Bachrach and Jeffrey~S Rosenschein.
\newblock Coalitional skill games.
\newblock In {\it  Proceedings of the 7th international joint conference on
  Autonomous agents and multiagent systems (AAMAS'08)}, pages 1023--1030, 2008.

\bibitem[\protect\citeauthoryear{Bachrach \bgroup \it  et al.\egroup
  }{2008}]{bachrach2008approximating}
Yoram Bachrach, Evangelos Markakis, Ariel~D Procaccia, Jeffrey~S Rosenschein,
  and Amin Saberi.
\newblock Approximating power indices.
\newblock In {\it  Proceedings of the 7th international joint conference on
  Autonomous agents and multiagent systems-Volume 2}, pages 943--950.
  International Foundation for Autonomous Agents and Multiagent Systems, 2008.

\bibitem[\protect\citeauthoryear{Bachrach \bgroup \it  et al.\egroup
  }{2013}]{bachrach2013computing}
Y.~Bachrach, D.~Parkes, and J.~Rosenschein.
\newblock Computing cooperative solution concepts in coalitional skill games.
\newblock {\it  Artificial Intelligence}, 204:1--21, 2013.

\bibitem[\protect\citeauthoryear{Bloch \bgroup \it  et al.\egroup
  }{2017}]{bloch2017centrality}
Francis Bloch, Matthew~O Jackson, and Pietro Tebaldi.
\newblock Centrality measures in networks.
\newblock {\it  Preprint, available at SSRN 2749124}, 2017.

\bibitem[\protect\citeauthoryear{Chalkiadakis \bgroup \it  et al.\egroup
  }{2011}]{chalkiadakis2011computational}
G.~Chalkiadakis, E.~Elkind, and M.~Wooldridge.
\newblock Computational aspects of cooperative game theory.
\newblock {\it  Synthesis Lectures on Artificial Intelligence and Machine
  Learning}, 2011.

\bibitem[\protect\citeauthoryear{Conitzer and
  Sandholm}{2003}]{conitzer2003complexity}
Vincent Conitzer and Tuomas Sandholm.
\newblock Complexity of determining nonemptiness of the core.
\newblock In {\it  Proceedings of the 4th ACM Conference on Electronic
  Commerce}, pages 230--231. ACM, 2003.

\bibitem[\protect\citeauthoryear{Deng and
  Papadimitriou}{1994}]{deng1994complexity}
Xiaotie Deng and Christos~H. Papadimitriou.
\newblock On the complexity of cooperative solution concepts.
\newblock {\it  Mathematics of Operations Research}, 19(2):257--266, 1994.

\bibitem[\protect\citeauthoryear{Dubey \bgroup \it  et al.\egroup
  }{1981}]{dubey1981value}
Pradeep Dubey, Abraham Neyman, and Robert~James Weber.
\newblock Value theory without efficiency.
\newblock {\it  Mathematics of Operations Research}, 6(1):122--128, 1981.

\bibitem[\protect\citeauthoryear{Elkind \bgroup \it  et al.\egroup
  }{2007}]{elkind2007computational}
Edith Elkind, Leslie~Ann Goldberg, Paul Goldberg, and Michael Wooldridge.
\newblock Computational complexity of weighted threshold games.
\newblock In {\it  Proceedings of the national conference on artificial
  intelligence}, volume~22, page 718. Menlo Park, CA; Cambridge, MA; London;
  AAAI Press; MIT Press; 1999, 2007.

\bibitem[\protect\citeauthoryear{Granovetter}{1977}]{granovetter1977strength}
Mark~S Granovetter.
\newblock The strength of weak ties.
\newblock In {\it  Social networks}, pages 347--367. Elsevier, 1977.

\bibitem[\protect\citeauthoryear{Ieong and Shoham}{2005}]{ieong2005marginal}
Samuel Ieong and Yoav Shoham.
\newblock Marginal contribution nets: a compact representation scheme for
  coalitional games.
\newblock In {\it  Proceedings of the 6th ACM conference on Electronic
  commerce}, pages 193--202. ACM, 2005.

\bibitem[\protect\citeauthoryear{Karlin and Peres}{2017}]{karlin2017game}
Anna~R Karlin and Yuval Peres.
\newblock {\it  Game theory, alive}, volume 101.
\newblock American Mathematical Soc., 2017.

\bibitem[\protect\citeauthoryear{Karpov}{2014}]{karpov2014equal}
Alexander Karpov.
\newblock Equal weights coauthorship sharing and the shapley value are
  equivalent.
\newblock {\it  Journal of Informetrics}, 8(1):71--76, 2014.

\bibitem[\protect\citeauthoryear{Krebs}{2002}]{krebs2002mapping}
Valdis~E Krebs.
\newblock Mapping networks of terrorist cells.
\newblock {\it  Connections}, 24(3):43--52, 2002.

\bibitem[\protect\citeauthoryear{Lindelauf \bgroup \it  et al.\egroup
  }{2013}]{lindelauf2013cooperative}
RHA Lindelauf, HJM Hamers, and BGM Husslage.
\newblock Cooperative game theoretic centrality analysis of terrorist networks:
  The cases of jemaah islamiyah and al qaeda.
\newblock {\it  European Journal of Operational Research}, 229(1):230--238,
  2013.

\bibitem[\protect\citeauthoryear{Maghami and
  Sukthankar}{2012}]{maghami2012identifying}
Mahsa Maghami and Gita Sukthankar.
\newblock Identifying influential agents for advertising in multi-agent
  markets.
\newblock In {\it  Proceedings of the 11th International Conference on
  Autonomous Agents and Multiagent Systems-Volume 2}, pages 687--694.
  International Foundation for Autonomous Agents and Multiagent Systems, 2012.

\bibitem[\protect\citeauthoryear{Michalak \bgroup \it  et al.\egroup
  }{2013}]{michalak2013efficient}
Tomasz~P Michalak, Karthik~V Aadithya, Piotr~L Szczepanski, Balaraman
  Ravindran, and Nicholas~R Jennings.
\newblock Efficient computation of the shapley value for game-theoretic network
  centrality.
\newblock {\it  Journal of Artificial Intelligence Research}, 46:607--650, 2013.

\bibitem[\protect\citeauthoryear{Michalak \bgroup \it  et al.\egroup
  }{2015}]{michalak2015defeating}
Tomasz~P Michalak, Talal Rahwan, Oskar Skibski, and Michael Wooldridge.
\newblock Defeating terrorist networks with game theory.
\newblock {\it  IEEE Intelligent Systems}, 30(1):53--61, 2015.

\bibitem[\protect\citeauthoryear{Narayanam \bgroup \it  et al.\egroup
  }{2014}]{narayanam2014shapley}
Ramasuri Narayanam, Oskar Skibski, Hemank Lamba, and Tomasz Michalak.
\newblock A shapley value-based approach to determine gatekeepers in social
  networks with applications.
\newblock In {\it  Proceedings of the Twenty-first European Conference on
  Artificial Intelligence}, pages 651--656. IOS Press, 2014.

\bibitem[\protect\citeauthoryear{Shapley}{1953}]{shapley1953value}
Lloyd~S Shapley.
\newblock A value for n-person games.
\newblock {\it  Contributions to the Theory of Games}, 2(28):307--317, 1953.

\bibitem[\protect\citeauthoryear{Skibski \bgroup \it  et al.\egroup
  }{2017}]{skibski2017axiomatic}
Oskar Skibski, Tomasz~P Michalak, and Talal Rahwan.
\newblock Axiomatic characterization of game-theoretic network centralities.
\newblock In {\it  AAAI}, pages 698--705, 2017.

\bibitem[\protect\citeauthoryear{Suri and Narahari}{2008}]{suri2008determining}
N~Rama Suri and Yadati Narahari.
\newblock Determining the top-k nodes in social networks using the shapley
  value.
\newblock In {\it  Proceedings of the 7th international joint conference on
  Autonomous agents and multiagent systems-Volume 3}, pages 1509--1512.
  International Foundation for Autonomous Agents and Multiagent Systems, 2008.

\bibitem[\protect\citeauthoryear{Szczepanski \bgroup \it  et al.\egroup
  }{2014}]{szczepanski2014centrality}
Piotr~L Szczepanski, Tomasz~P Michalak, and Michael Wooldridge.
\newblock A centrality measure for networks with community structure based on a
  generalization of the owen value.
\newblock In {\it  ECAI}, volume~14, pages 867--872. Citeseer, 2014.

\bibitem[\protect\citeauthoryear{Szczepa{\'n}ski \bgroup \it  et al.\egroup
  }{2016}]{szczepanski2016efficient}
Piotr~L Szczepa{\'n}ski, Tomasz~P Michalak, and Talal Rahwan.
\newblock Efficient algorithms for game-theoretic betweenness centrality.
\newblock {\it  Artificial Intelligence}, 231:39--63, 2016.

\bibitem[\protect\citeauthoryear{Tarkowski \bgroup \it  et al.\egroup
  }{2016}]{tarkowski2016closeness}
Mateusz~K Tarkowski, Piotr Szczepa{\'n}ski, Talal Rahwan, Tomasz~P Michalak,
  and Michael Wooldridge.
\newblock Closeness centrality for networks with overlapping community
  structure.
\newblock In {\it  Thirtieth AAAI Conference on Artificial Intelligence}, 2016.

\bibitem[\protect\citeauthoryear{Tarkowski \bgroup \it  et al.\egroup
  }{2017}]{tarkowski2017game}
Mateusz~K. Tarkowski, Tomasz~P. Michalak, Talal Rahwan, and Michael Wooldridge.
\newblock Game-theoretic network centrality: A review.
\newblock {\it  arXiv preprint arXiv:1801.00218}, 2017.

\bibitem[\protect\citeauthoryear{Tarkowski \bgroup \it  et al.\egroup
  }{2018}]{tarkowski2018efficient}
Mateusz~K Tarkowski, Piotr~L Szczepa{\'n}ski, Tomasz~P Michalak, Paul
  Harrenstein, and Michael Wooldridge.
\newblock Efficient computation of semivalues for game-theoretic network
  centrality.
\newblock {\it  Journal of Artificial Intelligence Research}, 63:145--189, 2018.

\end{thebibliography}

\newpage
\section*{Supplemental Material} 

In this manuscript we give some extra details on the proofs omitted from the main document: 

\section{Proof of Theorem~\ref{rational-csg}}

We will use the decomposition of coalitional skill games by linearity to reduce the problem to reasoning about \textit{simple skill games}, i.e. CSG consisting of a single task $t$ of unit profit. In this setting a  coalition $S$ is called {\it  winning} if it can accomplish task $t$ and {\it  losing} otherwise. Define by $W(t)$ the set of minimal winning coalitions for task $t$, i.e. of subsets $A$ such that $v(A)=1$ and $v(B)=0$ for any strict subset $B$ of $A$. 

Then the polynomial 
\[
1-\prod_{A \in W(t)} (1-\prod_{r\in A} Y_{r})
\]
is equal to 1 precisely when for some $A \in W(t)$ we have $Y_{r}=1$ for all $r\in A$.  Hence, denoting $S=\{r:Y_{r}=1\}$, we have 
\[
v(S)=1-\prod_{A \in W(t)} (1-\prod_{r\in A} Y_{r})
\]

\section{Proof of Theorem~\ref{pc}}

We will again use the decomposition of coalitional skill games by linearity to reduce the problem to reasoning about \textit{simple skill games}.

Clearly $\Phi_{i}^{\beta}$ can be computed by estimating, for $k=0,\ldots, n-1$ quantities $\mathbb{E}_{C\in C_{k}}[MC(C,i)].$ Nonzero marginal contributions arise from 
 ordered coalitions $C\in C_{k}$ such that 
\begin{itemize} 
\item[]  (1). $i$ is the last element of $C$. 
\item[] (2). $C$ is winning. 
\item[] (3). $C\setminus \{i\}$ is {\bf not} winning. 
\end{itemize} 
For every task $T_{j}$ (identified with a multiset of skills) 
denote by $\mathcal{T}^{*}_{j}$ the set of submultisets of $T_{j}$. 
For $T\in \mathcal{T}^{*}_{j}$ and $1\leq r\leq n-1$ denote by $n_{T,r}^{i}$ the number of {\it  ordered coalitions} $X$ {\it  of size exactly $r$ and not containing $i$} such that $S_{X}\cap T_{j}=T$. Denote $W_{i,j}=C_{i}\cap T_{j}$
the set of skills of agent $i$ that can contribute towards completing task $j$.
An ordered coalition satisfies properties (1)-(3) above {\bf if and only if}
$T_{j}\setminus W_{i,j} \subseteq S_{C\setminus \{i\}} \subsetneq T_{j}$. 
Thus 
\begin{equation}
\Phi_{i}^{\beta}= \sum\limits_{r=0}^{n-1} \beta(r)\big[\displaystyle\sum\limits_{T_{j}\setminus W_{i,j} \subseteq T \subsetneq T_{j}} \frac{(n-r-1)!}{n!}\cdot n_{T,r}^{i}\big].
\end{equation} 

We use a dynamic programming approach to compute parameters
$n_{T,r}^{i}$. The table has at most $2^{k}$ columns, each corresponding to a submultiset of $T_{j}$. Rows of the table correspond to pairs $(r,s)$, where 
$r\leq s\leq n$. The element on row $(r,s)$ and $T$, denoted by $n_{T,r,s}^{i}$, counts 
ordered coalitions $X$ of size $r$ not containing element $i$, formed with elements
from $a_1,a_2,\ldots, a_{s}$, such that $S_{X}\cap T_{j}=T$. 
Clearly $n_{T,r}^{i}=n_{T,r,n}^{i}.$

We start the table by filling in rows $(r,s)=(0,0)$ and
$(r,s)=(1,1)$. Clearly, $n_{T,0,k}^{i}=0$ for all $k\geq 1$ and $T
\subseteq T_{j}$, and $n_{T,1,1}^{i}$ equals the number of players $k
\neq i$ such that $S_{k}\cap T_{j}=T$. Thus rows $(0,0)$ and $(1,1)$
can be completed by simple player inspection. 

Now, coalitions $X$ of size $r$ not containing element $i$ with elements
from the set $a_{1},a_{2},\ldots a_{s}$ such that $S_{X}
\cap T_{j}=T$ decompose into two types: 
\begin{itemize} 
\item[-] Coalitions {\it  not containing} $a_{s}$. Their number is counted
  by $n_{T,r,s-1}^{i}$. 
\item[-] Coalitions {\it  containing} $a_{s}$. Let $Y=X\setminus a_{s}$,
  $D=S_{Y}\cap T_{j}$, $D\subseteq T$,$E=S_{a_s}\cap T_{j}$. It follows that
$D\cup E=T.$  Thus we get recurrence $n_{T,r,s}^{i}=n_{T,r,s-1}^{i}+\sum\limits_{D\cup (S_{s}\cap T_{j})=T} n_{D,r-1,s-1}^{i}$.  This equation allows us to fill row $(r,s)$ from rows $(r,s-1)$ and $(r-1,s-1)$, ultimately allowing to compute
parameters $n_{T,r}^{i}$ for all $i$. 
\end{itemize} 

It is easily seen that the complexity of the provided algorithm is $O(2^{k}\cdot poly(|\Gamma |))$. Thus the problem is fixed parameter tractable. 

\section{Proof of Theorem~\ref{positive-helping}}

By additivity it is enough to assume that $\Gamma$ is a single-task game. 

We will investigate quantity 
\begin{align} 
[v[S_{\pi}^{x}\cup \{x\}\cup N(v(S_{\pi}^{x}\cup \{x\})]-v(S_{\pi}^{x}\cup N(S_{\pi}^{x}))]- \nonumber \\
-\frac{1}{\Delta^{*}+1}\cdot (v(S_{\pi}^{x}\cup \{x\})-v(S_{\pi}^{x})). 
\label{fss}
\end{align} 

The average of this quantity over all permutations $\pi\in S_{n}$ yields $HC(x)$. Further define, for $A\subseteq V$, 
\begin{equation} 
h(A):= v(A\cup N(A))-\frac{1}{\Delta^{*}+1}\cdot v(A).
\end{equation} 

Quantity in Equation~(\ref{fss}) is, of course, nothing but $h(S_{\pi}^{x}\cup \{x\})-h(S_{\pi}^{x})$, and 
\begin{equation} 
HC(x)=E_{\pi\in S_{n}}[h(S_{\pi}^{x}\cup \{x\})-h(S_{\pi}^{x})]
\end{equation} 

\textbf{ Case A: $v(A)=0$.} Then $h(A)=0$ iff $v(A\cup N(A))=0$, $h(A)=1$, otherwise. 

\textbf{Case B: $v(A)=1$}.  Then $h(A)=\frac{\Delta^{*}}{\Delta^{*}+1}$. 

In conclusion:
\begin{enumerate} 
\item When $v(A)=1$ we also have $v(A\cup \{x\})=1$, hence $h(A\cup \{x\})=h(A)=\frac{\Delta^{*}}{\Delta^{*}+1}$ and $h(A\cup \{x\})-h(A)= 0$. 
\item If $v(A\cup \{x\})=0$ and $v(A\cup \{x\}\cup N(A\cup \{x\}))=0$, then $h(A\cup \{x\})-h(A)= 0$. 
 \item If $v(A\cup \{x\})=0$ and $v(A\cup \{x\}\cup N(A\cup \{x\}))=1$ but $v(A\cup N(A))=0$, then $h(A\cup \{x\})-h(A)= 1$. 
\item If $v(A\cup \{x\})=1$ and $v(A\cup N(A))=0$ then $h(A\cup \{x\})-h(A)= \frac{\Delta^{*}}{\Delta^{*}+1}$.
\item The only case when $h(A\cup \{x\})-h(A)$ is negative is when $v(A\cup N(A))=v(A\cup \{x\})=1$, $v(A)=0$, in which case $h(A\cup \{x\})-h(A)= -\frac{1}{\Delta^{*}+1}$. 
\end{enumerate} 

To compute $HC(x)$ one averages (over all $\pi\in S_{n}$) both positive and negative terms of type $h(S_{\pi}^{x}\cup \{x\})-h(S_{\pi}^{x})$.

If $x$ has no skill useful to the unique task $t$ then Case 5 cannot happen (since condition $v(A\cup \{x\})=1, v(A)=0$ is impossible), and we are done. 

In the opposite situation, suppose $x$ has the useful skill $s$.   To prove that $HC(x)\geq 0$ it is enough to compare the contribution of positive terms (Cases 3,4 in the previous enumeration) with that of negative terms (Case 5).  It turns out that we will only need to compare contributions from Cases 4 and 5. 

In both cases 4 and 5, as $v(S_{\pi}^{x}\cup \{x\})=1$, $v(S_{\pi}^{x})=0$ and $x$ adds one more skill to $S_{\pi}^{x}$, $x$ must add precisely the unique copy of skill $s$ that $S_{\pi}^{x}$ is missing to complete task $t$. 


Since $v(S_{\pi}^{x}\cup \{x\})=1$, $v(S_{\pi}^{x})=0$, in both cases $x$ must be the $k_{s,t}$'th element with skill $s$ in $\pi$. The difference between Cases 4 and 5 is, therefore, whether 
$v(S_{\pi}^{x}\cup N(S_{\pi}^{x})=1$, that is whether some node in $S_{\pi}^{x}$ is adjacent (or not) to a node with skill $s$ in $V\setminus S_{\pi}^{x}$. 

Let's condition on $(V\setminus (S_{\pi}^{x})\cap P(s)$ being equal to a fixed set $C$ (of size $n_{s}-k_{s}+1$) and compare the contributions of permutations falling in Cases 4 and 5 to $HC[x]$. 

It is easy to see that $v(S_{\pi}^{x}\cup N(S_{\pi}^{x}))=1$ iff some node in $S_{\pi}^{x}$ is adjacent to some node in $C$, i.e. if $S_{\pi}^{x}$ contains some node in $N(C)\setminus C$. 

In other words, we are in Case 4 iff $C\cup N(C)\subseteq V\setminus S_{\pi}^{x}$, i.e. when $x$ is the first element in $\{x\}\cup  N(C)\setminus  C$. This happens with probability $\frac{1}{|N(C)\setminus C|+1}$. The contribution of permutations $\pi$ with $S_{\pi}^{x}$ falling in Case 4 to the sum is $\frac{\Delta^{*}}{(\Delta^{*}+1)(|N(C)\cup C|+1)}$. 

As for case 5, we are in that case with conditional probability $\frac{|N(C)\setminus C|}{|N(C)\setminus C|+1}$. The contribution of permutations $\pi$ with $S_{\pi}^{x}$ falling in Case 5 to the sum is $-\frac{|N(C)\setminus  C|}{(\Delta^{*}+1)(|N(C)\setminus C|+1)}$.

The difference in contributions is $\frac{1}{(\Delta^{*}+1)(|N(C)\setminus C|+1)}\cdot [\Delta^{*}-|N(C)\setminus C|]$, which is $\geq 0$ by the definition of $\Delta^{*}$.

\section{Proof of Theorem~\ref{seven}}

By the linear decomposition of games as combination of veto games, we only need to compute the Helping Shapley value of $i$ for the $S$-veto game $v_{S}$. 
Clearly, if $i\not \in S\cup N(S)$, then $HSh[v_{S}](i)=0$. Indeed, in this case $i$ can help no coalition $T$ contain $S$, since $\hat{N}(i)\cap S= \emptyset$). So we concentrate on the case $i\in S\cup N(S)$.  There are two subcases: 
\begin{itemize}
\item[-] $i\in S$. Then $i$ is pivotal iff all elements of $NC(i,S)$ appear before $i$ in $\pi$. The probability of this happening is $\frac{1}{|NC(i,S)|+1}$.
\item[-] $i\in N(S)\setminus S$. Then $i$ is pivotal iff all elements of $NC(i,S)$ appear before $i$ in $\pi$ and some element of $S\cap N(i)$ appears after $i$ in $\pi$. That is 
\begin{itemize} 
\item[(a). ] \textbf{all elements of $NC(i,S)$ apppear before $i$ in $\pi$} (an event which happens with probability $1/(|NC(i,S)|+1)$),  but it is not the case that 
\item[(b). ] \textbf{$i$ is the last element of $S\cap N(i)$ in $\pi$.} Events (a), (b) are independent, since they refer to sets ($NC(i,S)$, $S\cap N(i)$ that do not intersect), and the probability of (b) happening is 
$1/|Cov(i,S)|$.
\end{itemize} 
\end{itemize} 

To derive the second formula we simply use the formula for the Shapley value. 

\section{Proof of Theorem~\ref{sharp-p}}

Since computing the Shapley value of an arbitrary $CSG$ is $\#P$-complete \cite{aziz2009algorithmic}, the result follows by chosing $G=(V,\emptyset)$, the empty graph on $V$. In such a case $v(S)=v_{cen}(S)$ for every $S\subseteq V$. 
 Also $Sh[v](x)=HSh[v](x)$, $HC(x)=Sh[v](x)\cdot (1-\frac{1}{\Delta+1})$. 

\section{Proof of Theorem~\ref{one}}

Decomposing again the game $v$ into a weighted combination of single task games, we reduce the proof of the formulas to the setting when the game is a such a game. 

A coalition $S$ satisfies $v_{cen}(S)=1$ iff $v(S\cup N(S))=1$. 
We infer that $x$ is pivotal to coalition $S$ under $v_{cen}$ iff $v(S\cup \{x\}\cup N(S\cup \{x\})=1$ , but $v(S\cup N(S))=0$.  \textit{Because the game is a pure skill game}, these conditions are equivalent to requiring that $x$ or some node in $N(x)$ can perform task $t$, but  nodes in $S$ cannot. In other words $x$ is the first node in permutation $\pi$ from the set $P(t)\cup N(P(t))$.  

Given this argument, the first proof is simple: The probability that in a random permutation  $x$ is the first player among those in $P(t)\cup N(P(t))$ is $1/|P(t) \cup N(P(t))|$. A similar computation works for computing the Shapley value of $v$, establishing the formula for $HC[v](x)$. Note that, since $|P(t)\cup N(P(t))|\leq (\Delta + 1)|P(t)|$, $HC[v](x)\geq 0$. 

As for the second proof, for $x$ to be pivotal to $HSh[v]$, none of the elements before $x$ in $\pi$ must be in $T$, while $x$ must be in $T\cup N(T)$. There are two cases: 
\begin{itemize} 
\item[-] $x\in P(t)$, i.e. $t\in T_{x}$.  Then $x$ must be the first element of $P(t)$ in permutation $\pi$. In a random permutation this happens with probability $1/|P(t)|$. 
\item[-] $x\in N(P(t))\setminus P(t)$, i.e. $t\in T_{\hat{N}(x)}\setminus T_{x}$. Then $x$ must be the first element from $\{x\}\cup P(t)$ in permutation $\pi$. In a random permutation this happens with probability $1/(|P(t)|+1)$.
\end{itemize}  
The computation of the Shapley value is equally simple, and uses similar probabilistic arguments. 

\section{Proof of Theorem~\ref{axioms}}

A couple of straightforward verifications. 

\section{Proof of Theorem~\ref{axiomatic}}

Recall \cite{karlin2017game} that any characteristic function can be written as a linear combination of the characteristic functions $v_{S}$ of all $2^{n}-1$ nontrivial $S$-veto games, $\Gamma=\sum\limits_{S\neq \emptyset} a_{S} v_{S}$. 

Now we apply linearity and we infer that 
\[
f[\Gamma]=f[\sum\limits_{S\neq \emptyset} a_{S} v_{S}]=\sum\limits_{S\neq \emptyset} a_{S} f[v_{S}] = \sum\limits_{S\neq \emptyset} a_{S} HSh[v_{S}] 
\]
Plugging in the explicit formula for $HSh[v_{S}]$ we immediately infer that $f[\Gamma]=HSh[\Gamma]$. 

\section{Helping Centralities in the 9/11 Terrorist Network: Some Details}

We have implemented a simple Python script \textit{calcShapley.py} for computing measures $Help$ and $HC$ (see the details on it in the next section).

The (raw) results are presented in Tables~\ref{default} and~\ref{default1}. 

\begin{table}[htp]
\caption{(Unnormalized) Helping Centralities.}
\begin{center}
\begin{tabular}{|c|c|}
\hline \hline
Nodes & Help \\
\hline 
16 & 0.279120879121 \\
\hline
13 & 0.234188034188\\
\hline
7,9,12 & 0.2\\
\hline
6,8,10,11,4 & 0.123076923077 \\
\hline
5 & 0.0888888 \\
\hline 
3,14,15 & 0.0791208791209\\
\hline 
1,19 & 0.0449328449328\\
\hline
17,18 & 0.034188034188\\
\hline 
2 & 0 \\
\hline
\end{tabular}
\end{center}
\label{default}
\end{table}%

\begin{table}[htp]
\caption{Shapley-Based Helping Centralities.}
\begin{center}
\begin{tabular}{|c|c|}
\hline \hline
Nodes & HC \\
\hline 
16 & 0.118687 \\
\hline
13 & 0.113248 \\
\hline
7,9,12 & 0.108198\\
\hline
6,8,10,11 & 0.052642 \\
\hline
4 & 0.049077 \\
\hline 
5 & 0.044026 \\
\hline 
3,14,15,1,19 & 0.019514\\
\hline 
17,18,2 & 0.014075 \\
\hline 
\end{tabular}
\end{center}
\label{default1}
\end{table}%

 \section{Details on the Python script \textit{calcShapley.py}} 

The input is read from a file and has the following structure: 
\begin{itemize} 
\item[-] On the first line there are 4 values: the number of terrorists, martial artists, pilots and edges in the network. The values are given in this order
\item[-] The second line names which nodes are martial artists
\item[-] The third line names which nodes are pilots
\item[-] The fourth line contains the number of pilots and the number of martial artists required to complete the attack
\item[-] The next lines, to the end of the file, contain two integers x and y representing an undirected edge between nodes x and y
\end{itemize} 
Following the input phase we construct a list "fact" to hold the factorials of numbers from 0 to 19.
Afterwards we can proceed with the actual computation.
The idea is to count how many coalition a certain node can help. We use backtracking to generate all unordered coalitions. If a coalition doesn't contain at least one pilot and two martial artists, we iterate through all 19 terrorists to see who could help this coalition achieve its goals. We perform three checks for all terrorists: does the coalition win if terrorist x joins it? Does it win if x joins with his neighbors? If the coalition with its neighbors doesn't win, does it win if x and his neighbors join? For the questions answered by yes, we count in the number of ordered coalitions helped in a certain way by x. To get the number of ordered coalitions from the unordered ones, we need to count all permutations that begin with the coalition in question, continue with terrorist x, and end with the rest of the terrorists that are not x and not in the coalition. To get this number, we multiply the factorial of the size of the coalition with the factorial of the remaining terrorists. From these counts we can then compute the Shapley value and the HC and Help indices introduced in the paper.
The function "iterate" recursively generates all combinations in a list "coalition", which contains 20 values of 0 or 1 representing for each terrorist whether he's in the coalition or not. In other words, if coalition[x] is 1, then x is in the current coalition. For each coalition and each terrorist x the check functions are called: checkshapley, checkhelp and checkvcen. These functions construct the coalition extended with x in the appropriate way according to the logic described above. The counts are held in the respective variables scoreshapley, scorehelp and scorevcen. Intuitively, the Shapley value for the game v is retrieved by dividing scoreshapley with the total number of coalitions (19!). The Shapley value in vcen is retrieved similarly by dividing scorevcen with 19!. Using these values we can compute the HC and Help indices described in the paper using their respective formulas.
In the end, the print$\_$all function is called. We print the Shapley, HC and Help values for each terrorist. The last three printed lines contain for each value concept the sum of values of all terrorists, verifying that the Shapley and HC values sum up to 1, whereas Help does not.

\end{document}